\DeclareAcronym{PPL}{short=PPL, long=probabilistic programming language}
\DeclareAcronym{SED}{short=SED, long=sequential experimental design}
\DeclareAcronym{GP}{short=GP, long=Gaussian process}
\DeclareAcronym{MCMC}{short=MCMC, long=Markov chain Monte Carlo}
\DeclareAcronym{PDE}{short=PDE, long=partial differential equation}
\DeclareAcronym{MSE}{short=MSE, long=mean square error}
\newtheorem{theorem}{Theorem}
\newtheorem{lemma}{Lemma}
\newtheorem{definition}{Definition}
\newtheorem{proposition}{Proposition}
\newtheorem*{definition*}{Definition}
\crefname{assumption}{Assumption}{Assumptions}
\newcommand{\E}{\mathbb{E}}
\newcommand{\Var}{\mathbb{V}}
\newcommand{\Cov}{\mathbb{C}}
\DeclareMathOperator*{\argmax}{arg\,max}
\DeclareMathOperator*{\argmin}{arg\,min}
\newcommand*{\email}[1]{\bgroup\color{blue}\href{mailto:#1}{#1}\egroup}
\renewcommand*{\url}[1]{\bgroup\color{blue}\href{#1}{#1}\egroup}
\newcommand{\todo}[1]{\bgroup\color{red}TODO:~#1\egroup}
\algrenewcommand\algorithmicindent{1.em}%
\title{GaussED: A Probabilistic Programming Language for Sequential Experimental Design}
\author{Matthew A. Fisher$^1$, Onur Teymur$^{1,2}$, Chris. J. Oates$^{1,2}$ \\
\small $^1$Newcastle University, UK \\
\small $^2$Alan Turing Institute, UK }
\begin{document}

\maketitle

\begin{abstract}
\begin{abstract}
    Sequential algorithms are popular for experimental design, enabling emulation, optimisation and inference to be efficiently performed.
    For most of these applications bespoke software has been developed, but the approach is general and many of the actual computations performed in such software are identical.
    Motivated by the diverse problems that can in principle be solved with common code, this paper presents \texttt{GaussED}, a simple probabilistic programming language coupled to a powerful experimental design engine, which together automate sequential experimental design for approximating a (possibly nonlinear) quantity of interest in Gaussian processes models.
    Using a handful of commands, \texttt{GaussED} can be used to: solve linear partial differential equations, perform tomographic reconstruction from integral data and implement Bayesian optimisation with gradient data.
\end{abstract}
\end{abstract}

\section{Introduction} \label{sec: introduction}

This paper concerns the development of a \ac{PPL} for \ac{SED}.
A \ac{PPL} is an attempt to streamline the process of performing computation with a statistical model \citep{goodman2013principles}.
\ac{SED} is often associated with computational workflows that are complicated and cumbersome, as one is required to iterate between designing an experiment (to augment a dataset with a new datum) and performing inference for a specified quantity of interest (based on the augmented dataset).
Thus \ac{SED} is well-placed to benefit from the development of a high-level \ac{PPL}.
The research challenge here is to identify a class of statistical models that are sufficiently general to include important applications of \ac{SED}, while being sufficiently narrow to permit both inference and \ac{SED} to be efficiently and automatically performed.
This paper aims to address two important open problems in \ac{PPL} for \ac{SED}:
\begin{enumerate}
    \item[P1] automate \ac{SED} for \ac{GP} models with general nonlinear quantities of interest, in the setting of continuous linear functional data (e.g. function values, gradients, integrals);
    \item[P2] circumvent the requirement for the user to specify an acquisition function for \ac{SED}, in the spirit of \textit{AutoML} \citep{Hutter}.
\end{enumerate}
In limiting attention to the relatively narrow class of \ac{GP} models in P1, we aim to develop more powerful algorithms than would have been possible in a more general-purpose \ac{PPL}.
The setting of P1 includes \ac{SED} for the important tasks of \emph{emulating} computer models \citep{kennedy2001bayesian}, performing \emph{Bayesian optimisation} \citep{shahriari2015taking}, and running \emph{probabilistic numerical methods} \citep{hennig2015probabilistic}.
Bespoke \acp{PPL} have been developed for these individual tasks, but many of the actual computations performed in such software are identical.
Indeed, in \Cref{sec: demonstrations} we demonstrate how a single \ac{PPL} can: solve partial differential equations using a probabilistic numerical method, perform tomographic reconstruction from integral data, implement Bayesian optimisation with gradient data, and emulate a complex computer model.
Such a \ac{PPL} enables advances in computational methodology to be immediately brought to bear on diverse application areas where \ac{SED} is performed.

Existing \acp{PPL} for \ac{SED} require the user to specify an \emph{acquisition function}, which is used to select the next experiment and serves to control the exploration-exploitation trade-off.
Unfortunately, the process of determining an effective acquisition function requires domain expertise and, while several choices have been documented in the literature \citep[see e.g.][for acquisition functions in Bayesian optimisation]{wilson2018acquisition}, many problems that fall into the setting of P1 have not received such detailed treatment.
In removing the technical burden of prescribing the acquisition function in P2, we may sacrifice a degree of performance relative to dedicated software for tasks such as Bayesian optimisation, for which bespoke acquisition functions have been developed.
However, empirical results in this paper suggest that the loss of performance may be modest, and in turn we are able to considerably expand the applicability of the \ac{PPL}.

\subsection{Our Contribution} \label{subsec: contribution}

In this paper we present \texttt{GaussED}, a simple \ac{PPL} coupled to a powerful \emph{experimental design engine} for performing \ac{SED} in the nonparametric \ac{GP} context.
\texttt{GaussED} achieves the aims P1 and P2, just outlined.
To achieve P1, and to ensure that \texttt{GaussED} can handle data arising from general continuous linear functionals, we present a rigorous probabilistic treatment of conditioning for \acp{GP}.
This enables us to, for example, prevent attempts to condition on a derivative that does not exist under the \ac{GP} model.
To achieve P2 and circumvent the user-specification of an acquisition function, we adopt a classical but surprisingly overlooked decision-theoretic approach to \ac{SED}, which requires only the quantity of interest and a loss function to be specified.
The loss function quantifies the loss incurred when the true quantity of interest is approximated, a notion that is meaningful in the applied context and comparatively straightforward to elicit.
The computational backend for \texttt{GaussED} comprises a spectral \ac{GP}, a reparametrisation trick, and stochastic optimisation over the experimental design set.

\subsection{Related Work} \label{subsec: related work}

Several general-purpose \acp{PPL} have been developed for Bayesian parameter inference in parametric models \citep[e.g.][]{wood2014new,carpenter2017stan,bingham2019pyro}, often based on Markov chain Monte Carlo or variational approximations in the backend.
Specialised \acp{PPL} have been developed for inferring parameters that minimise a predictive loss \citep[e.g. using neural networks;][]{paszke2019pytorch}, often based on automatic differentiation and stochastic gradient descent.
For inference in nonparametric models, specialised \acp{PPL} have been developed for \ac{GP} models \citep[e.g.][]{rasmussen2010gaussian,matthews2017gpflow}, including for numerical applications \citep{probnum}.

The combination of \ac{PPL} and \ac{SED} for general parametric models has received attention in \citet[Chapter 11]{rainforth2017automating} and \citet{ouyang2016practical,kandasamy2018myopic}, who provided a high-level syntax for Bayesian \ac{SED}.
Several application-specific \ac{PPL} have been also been developed for \ac{SED} in parametric models \citep[e.g.][]{liepe2013maximizing}.
The focus of much of the research involving parametric models centres around the computational challenge of conditioning random variables on observed data, a problem that is often difficult \citep{olmedo2018conditioning}.

\ac{SED} for nonparametric models has received considerable attention in the context of Bayesian optimisation; see the review of \citet{shahriari2015taking}.
However, existing \acp{PPL} are specialised to this single task.
More closely related to the present paper, \cite{emukit2019} developed a \ac{PPL} called \texttt{Emukit}, in which computer model emulation, Bayesian optimisation, and a number of probabilistic numerical methods are automated.
However, \texttt{Emukit} focuses on function-value data as opposed to general continuous linear functionals (c.f. P1) and requires the user to specify a suitable acquisition function (c.f. P2). 

\paragraph{Outline:}{ The remainder of the paper is structured as follows: 
\Cref{sec: methods} presents a detailed technical description of \texttt{GaussED}. 
\Cref{sec: demonstrations} described the syntax of \texttt{GaussED} and presents diverse applications of \ac{SED}, for which bespoke code had previously been developed but whose automation is essentially trivial using \texttt{GaussED}.
The potential and limitations of \texttt{GaussED} are summarised in \Cref{sec: discussion}.
}

\section{Methodology} \label{sec: methods}

This section presents the statistical and computational methodology used in \texttt{GaussED}.
First, in \Cref{subsec: notation}, the notation and mathematical set-up are introduced.
The elements of \ac{SED} are outlined in \Cref{subsec: bayesian experimental design} and a classical, but surprisingly overlooked, approach to \ac{SED} is presented in \Cref{subsec: decision theoretic approach}.
This decision-theoretic approach circumvents the requirement to specify an acquisition function and, moreover, enables state-of-the-art stochastic optimisation to be employed in \ac{SED}, as explained in \Cref{subsec: stochastic optimisation,subsec: spectral}.
The hyperparameters of the \ac{GP} model are estimated online during \ac{SED}, as explained in \Cref{subsec: hyperparam}.

\subsection{Notation and Set-Up} \label{subsec: notation}

Let $\mathcal{F}$ be a normed vector space of real-valued functions on some domain $\mathcal{X}\subseteq \mathbb{R}^d$.
The problems that we consider involve a latent function $\mathsf{f} \in \mathcal{F}$, associated with a high computational cost, and the task is to approximate a (possibly nonlinear) quantity of interest $q(\mathsf{f})$ using \ac{SED}. 
The experiments are represented\footnote{The focus of this paper is on data that are \emph{exactly} observed, and as such we do not introduce a measurement error model. 
Gaussian errors can be handled in \texttt{GaussED} by building measurement error into the \ac{GP} covariance model.} as continuous linear functionals $\delta : \mathcal{F} \rightarrow \mathbb{R}$ and may, for example, include pointwise evaluation $\delta(\mathsf{f}) = \mathsf{f}(x)$ of the latent function $\mathsf{f}$ at a specified location $x \in \mathcal{X}$, pointwise evaluation of a gradient, or evaluation of an integral, such as a Fourier transform. 
A limited computational budget motivates the careful selection of informative experiments $\delta_1,\dots,\delta_n$.
\ac{SED} is often preferred\footnote{Sequential design is known to be near-optimal under \emph{adaptive submodularity} \citep{golovin2011adaptive}.} over \textit{a priori} experimental design, since it allows data $\delta_1(\mathsf{f}), \dots, \delta_{n-1}(\mathsf{f})$, which have already been observed, to inform the design of the next functional $\delta_{n}$.

Bayesian statistics provides a general framework in which \ac{SED} can be performed.
To this end, let $(\Omega,\mathcal{S},\mathbb{P})$ be a probability space and consider a random variable $f : \Omega \rightarrow \mathcal{F}$.
This serves as a statistical model for the latent $\mathsf{f}$, and encodes \emph{a priori} knowledge, such as the smoothness of $\mathsf{f}$. 
To notate the distribution of $f$, we first define the \emph{pre-image} of a set $B \subseteq \mathcal{F}$ as $f^{-1}(B) := \{\omega \in \Omega : f(\omega) \in B\}$ and we let $f_\# \mathbb{P}$ denote the \emph{pushforward} of $\mathbb{P}$ through $f$; i.e. the probability distribution on $\mathcal{F}$ that assigns, to each Borel set $B \subseteq \mathcal{F}$, the mass $f_\# \mathbb{P}(B) := \mathbb{P}(f^{-1}(B))$.
The distribution of $f$ will be denoted $\mathbb{P}_f := f_\# \mathbb{P}$ in the sequel.
Our presentation allows for general priors for $f$ until \Cref{subsec: spectral}, at which point we will assume $f$ is a \ac{GP}.
Throughout we adopt the convention that $\mathsf{f}$ refers to the latent function of interest, $f$ is a random variable model for $\mathsf{f}$, and $\mathrm{f}$ is a generic element of the set $\mathcal{F}$.

\subsection{Sequential Experimental Design} \label{subsec: bayesian experimental design}

\ac{SED} iterates between designing an experiment $\delta_n$, to augment a dataset with a new datum $\delta_n(\mathsf{f})$, and performing inference for a specified quantity of interest, based on the augmented dataset $\bm{\delta}_n(\mathsf{f}) := (\delta_1(\mathsf{f}), \dots, \delta_n(\mathsf{f}))^\top$. 
Let $\mathcal{D} \subseteq \mathcal{F}'$ indicate the \textit{design set}, where $\mathcal{F}'$ is the topological dual space of $\mathcal{F}$, containing the continuous linear functionals on $\mathcal{F}$.
The design set $\mathcal{D}$ will depend on the problem at hand, and contains only the experiments that can actually be performed.
At iteration $n$, \ac{SED} selects an experiment $\delta_n$ from the design set in order that an \emph{acquisition function} is maximised\footnote{To avoid pathological cases, in this paper the existence of a (not necessarily unique) maximum is always assumed.}:
\begin{equation} 
\delta_n \in \argmax_{\delta \in \mathcal{D}} A(\delta ; \mathbb{P}_f ,  \bm{\delta}_{n-1}(\mathsf{f})) \label{eq: acq}
\end{equation}
The role of the acquisition function $A$ is to control the exploration-exploitation trade-off, but the computational convenience of computing \eqref{eq: acq} is also important.
Much research has been dedicated to exploring choices for $A$, and the statistical and computational properties of the associated sequence $(\delta_n)_{n=1}^\infty$.
Specific applications, where interest is not necessarily in $\mathsf{f}$ but rather a derived quantity of interest $q(\mathsf{f})$, have developed bespoke acquisition functions that balance computational cost with accurate approximation of the quantity of interest, in particular in Bayesian optimisation \citep[see Table 1 in][]{wilson2018acquisition}.
This presents a major problem (P2) for the development of a general purpose \ac{PPL} for \ac{SED}, since in general we cannot expect a user to specify a suitable acquisition function for the problem at hand.

As a first step toward solving P2, we consider a Bayesian approach to the design of an acquistion function.
To this end, let $\mathbb{P}_f(\cdot | \bm{\delta}_n(\mathsf{f}) )$ denote the conditional distribution (or \emph{posterior}) of $f$ obtained by setting the values $\bm{\delta}_n(f)$ equal to the observed data $\bm{\delta}_n(\mathsf{f})$.
From a mathematical perspective, the proper construction of a conditional distribution for an infinite-dimensional random variable $f$ is non-trivial; we suppress further discussion in the main text but refer the reader to \Cref{app: math prelim} for full mathematical detail. 
A Bayesian approach to the design of an acquisition function is then to let $U : \mathbb{R}^{n-1} \times \mathbb{R} \rightarrow \mathbb{R}$ be a \textit{utility} function, to be specified, and to seek an experiment for which the current expected utility 
\begin{equation} \label{eq: bayes decision}
A(\delta; \mathbb{P}_f ,  \bm{\delta}_{n-1}(\mathsf{f})) = \textstyle \int U(\bm{\delta}_{n-1}(\mathsf{f}), \delta(\mathrm{f}) ) \; \mathrm{d}\mathbb{P}_f(\mathrm{f} | \bm{\delta}_{n-1}(\mathsf{f}) )
\end{equation}
is maximised.
The utility $U(\bm{\delta}_{n-1}(\mathsf{f}), \delta(\mathrm{f}) )$ represents the value to the user of observing the datum $\delta(\mathrm{f})$.
Thus the design of an acquisition function can be reduced to the design of a utility function.
A popular default choice for $U$ is the \textit{information gain} \citep{lindley1956measure}
\begin{align}
\text{KL}( \; \mathbb{P}_f(\cdot | \bm{\delta}_{n-1}(\mathsf{f}), \delta(\mathrm{f}) ) \; \| \; \mathbb{P}_f(\cdot | \bm{\delta}_{n-1}(\mathsf{f}) ) \; ) , \label{eq: EU}
\end{align}
which quantifies the extent to which observation of the datum $\delta(\mathrm{f})$ changes \textit{a posteriori} belief; here $\text{KL}$ denotes the Kullback--Leibler divergence.
For related approaches and discussion see the recent survey in \cite{kleinegesse2021gradientbased}.
However, in the setting where data are exactly observed, the two distributions in \eqref{eq: EU} will be mutually singular and the Kullback--Leibler divergence will not exist.
This renders information-based acquisition functions such as \eqref{eq: EU} unsuitable for our \ac{PPL}.
Instead, we propose to revisit a classical but often overlooked idea from experimental design, next.

\subsection{A Decision-Theoretic Approach} \label{subsec: decision theoretic approach}

A general approach to construction of a utility $U$ is provided by Bayesian decision theory in the \textit{parameter inference} context\footnote{The decision-theoretic approach was advocated by \citet[][Section 2.5]{berger1985decision}, who wrote ``\textit{better inferences can often be done with the aid of decision-theoretic machinery and inference losses}''. }.
Let $L : \mathcal{F} \times \mathcal{F} \rightarrow \mathbb{R}$ denote the \textit{loss} $L(\mathrm{f},\mathrm{g})$ when estimating the function (or \textit{parameter}) $\mathrm{f}$ by $\mathrm{g}$.
Then we can take $U$ to be the negative Bayes' risk
\begin{align}
    \textstyle - \min_{\mathrm{g} \in \mathcal{F}} \int L(\mathrm{g},\mathrm{g}') \; \mathrm{d}\mathbb{P}_f(\mathrm{g}' | \bm{\delta}_{n-1}(\mathsf{f}) , \delta(\mathrm{f}) ) , \label{eq: U bayes risk}
\end{align}
which corresponds to the negative expected loss when the Bayes decision rule $\mathrm{g}$ is used.
Compared to an acquisition function or a utility function, it can be more straightforward to specify a suitable loss function $L$, since no consideration of the design set is required.
Although appealing in terms of its generality, the presence of the optimisation over $\mathrm{g}$ has historically rendered this utility unappealing from a computational viewpoint, and motivated more convenient choices, such as \eqref{eq: EU}, that have since become canonical \citep[see the survey in][]{chaloner1995experimental}.
However, we argue that the presumed intractability of loss-based utilities might need to be revisited in light of modern and powerful stochastic optimisation techniques.
Indeed, for loss functions of the form $L(\mathrm{f},\mathrm{g}) = \|q(\mathrm{f}) - q(\mathrm{g})\|^2$, indicating that one has a quantity of interest $q(\mathrm{f})$ taking values in a normed space\footnote{A focus on squared error loss is only a mild restriction, since we are free to re-parametrise the quantity of interest $q$ as $t \circ q$, where $t$ is an injective map (to ensure that information is not lost). Through careful selection of $t$ we may formulate the \ac{SED} task in a setting where squared error loss is appropriate for the task at hand.}, under mild conditions \eqref{eq: U bayes risk} is equal to
\begin{align}
    \textstyle - \frac{1}{2} \iint  L(\mathrm{g},\mathrm{g}') \; \mathrm{d}\mathbb{P}_f(\mathrm{g} | \bm{\delta}_{n-1}(\mathsf{f})  , \delta(\mathrm{f})  ) \; \mathrm{d}\mathbb{P}_f(\mathrm{g}' | \bm{\delta}_{n-1}(\mathsf{f}) , \delta(\mathrm{f}) ) . \label{eq: rewrite U}
\end{align}
The required regularity conditions and a formal proof are contained in \Cref{app: decision theory}.
At first glance it is unclear why this observation is helpful, since we have replaced an optimisation problem with an integration problem, and integration is typically \textit{more} difficult than optimisation.
However, this formulation turns the experimental design problem to find $\delta_n$ into a double expectation and, if the design set $\mathcal{D}$ has enough structure for calculus, then gradient-based stochastic optimisation can be applied. 

The restriction to squared error loss is not as limited as it may first appear, since one has the freedom to specify the quantity of interest $q(\mathsf{f})$ in such a way that application of squared error loss to $q(\mathsf{f})$ captures salient aspects of the task at hand.
Concrete examples of this are provided in \Cref{subsec: tomographic reconstruction}.

\subsection{Stochastic Optimisation} \label{subsec: stochastic optimisation}

Following this decision-theoretic approach, an acquisition function is obtained in expectation form by plugging \eqref{eq: rewrite U} into \eqref{eq: bayes decision} and applying the law of total probability, producing
\begin{align}
    A(\delta; \mathbb{P}_f ,  \bm{\delta}_{n-1}(\mathsf{f})) = \textstyle - \frac{1}{2} \iint L(\mathrm{g},\mathrm{g}') \; \mathrm{d}\mathbb{P}_f(\mathrm{g}' | \bm{\delta}_{n-1}(\mathsf{f})  , \delta(\mathrm{g})  ) \; \mathrm{d}\mathbb{P}_f( \mathrm{g} | \bm{\delta}_{n-1}(\mathsf{f})  ). \label{eq: final A} 
\end{align}
This acquisition function does not permit a closed form in general.
Several numerical methods have been proposed for maximisation of acquisition functions in the literature, including Bayesian optimisation \citep{overstall2017design,kleinegesse2019implicit}, non-gradient based Monte-Carlo methods, and approximation strategies. 
Similar to the approach\footnote{\cite{wilson2018acquisition} performed a reparametrisation trick by restricting attention to acquisition functions that depend on the \ac{GP} only at a finite number of locations in the domain $\mathcal{X}$; in contrast, this paper exploits a spectral approximation of the \ac{GP}, described in \Cref{subsec: spectral}.} of \cite{wilson2018acquisition}, here we consider the use of stochastic optimisation techniques \citep{robbins1951stochastic} for selecting an experiment $\delta$ for which \eqref{eq: final A} is approximately maximised.
For an overview of stochastic optimisation, see \cite{kushner2003stochastic, ruder2016overview}. 
First we perform a \emph{reparametrisation trick} \citep{kingma2013auto}, expressing
\begin{align}
g' \sim \mathbb{P}_f( \cdot | \bm{\delta}_{n-1}(\mathsf{f}) , \delta(\mathrm{g}))  \Leftrightarrow g' = \eta(\omega ; \mathbb{P}_f,  \bm{\delta}_{n-1}(\mathsf{f}) , \delta(\mathrm{g})), \; \omega \sim \mathbb{P},  \label{eq: reparam trick}
\end{align}
using a deterministic transformation $\eta$ of a random variable $\omega$ that is $\delta$-independent.
\Cref{subsec: spectral}, below, details how we applied the reparametrisation trick to a \ac{GP} model.
Now, suppose further that the elements of the design set can be parametrised as $\mathcal{D} = \{\delta_z\}_{z \in \mathbb{R}^m} \subseteq \mathcal{F}$.
Assuming sufficiently regularity for the following calculus to be well-defined, an unbiased estimator of the gradient of the acquisition function is
\begin{align*}
     \textstyle \frac{\partial}{\partial z_i} A(\delta_z; \mathbb{P}_f ,  \bm{\delta}_{n-1}(\mathsf{f})) \approx \textstyle - \frac{1}{2} \frac{1}{NM} \sum_{i=1}^N \sum_{j=1}^M \frac{\partial}{\partial z_i} L(g_i, \eta( \omega_{ij} , \mathbb{P}_f , \bm{\delta}_{n-1}(\mathsf{f}) , \delta_z(g_i)) ) , \label{eq: unbias est}
\end{align*}
where the $g_i$ are independent random variables with distribution $\mathbb{P}_f(\cdot | \bm{\delta}_{n-1}(\mathsf{f}))$ and the $\omega_{ij}$ are independent random variables with distribution $\mathbb{P}$.
This is an instance of \emph{nested Monte Carlo}. 
The optimal balance between $N$ and $M$ for a fixed computational budget is discussed in \cite{rainforth2018nesting}; for a continuously differentiable gradient, an optimal choice\footnote{The values $M = 9$, $N=9^2$, were used for all experiments we report, being among the smallest values for which stochastic optimisation was routinely successful.} is $N\propto M^2$. 

\texttt{GaussED} exploits state-of-the-art spectral \acp{GP} to perform the reparametrisation trick, as presented next.

\subsection{Spectral Approximation of GPs} \label{subsec: spectral}

Up to this point our discussion applied to general statistical models $\mathbb{P}_f$ for the latent function $\mathsf{f}$.
In the remainder \acp{GP} will be used, since they facilitate closed form conditional distributions, as appearing in \eqref{eq: final A}.
The purpose of this section is twofold; to briefly introduce \acp{GP} and to describe how the reparametrisation trick can be performed.

A random variable $f$ taking values in a normed vector space $\mathcal{F}$ is \emph{Gaussian} if, for every continuous linear functional $\delta:\mathcal{F}\rightarrow\mathbb{R}$, the random variable $\delta(f)$ is a Gaussian on $\mathbb{R}$; see Definition 2.41 in \citet{sullivan2015uncertainty}.
It follows that the statistical properties of a \ac{GP} are characterised by its mean function $\mu(x) \coloneqq \E[f(x)]$, $x \in \mathcal{X}$, and covariance function $k(x,y) \coloneqq \Cov[f(x),f(y)]$, $x,y \in \mathcal{X}$, and we write $f\sim \mathcal{GP}(\mu,k)$. 
\acp{GP} admit conjugate inference, meaning that for a continuous linear functional $\delta\in \mathcal{D}$, the conditional distributions $\mathbb{P}_f(\cdot | \delta(\mathsf{f}))$ are also Gaussian, with mean and covariance functions that can be computed in closed form; see \Cref{subsec: disintegration of gaussian}.

For the reparametrisation trick, we aim to write a \ac{GP} as a deterministic transformation $f = \eta(\omega)$ of a random variable $\omega$, such that the distribution of $\omega$ does not depend on $\mu$ or $k$.
However, being a nonparametric statistical model, an infinite-dimensional $\omega$ will in general be required.
This motivates the use of an accurate finite-dimensional approximation of a \ac{GP} at the outset, i.e. for the prior $\mathbb{P}_f$.
A truncated Karhunen--Loeve expansion  \citep[see e.g. Theorem 11.4 in][]{sullivan2015uncertainty} in principle provides such a transformation, however this requires computation of the eigenfunctions of $k$, and linear functionals thereof, which will in general be difficult. The solution adopted in \texttt{GaussED} is to use the finite-rank approximation to isotropic \acp{GP} introduced in \cite{solin2019hilbert}: $f = \eta(\omega) = \mu + \sum_{i=1}^m \omega_i \phi_i$,
where the coefficients $\omega_i \sim \mathcal{N}(0,s(\sqrt \lambda_i))$ are independent, $s$ is the \emph{spectral density} of $k$, and $(\phi_i , \lambda_i)$ are the pairs of eigenfunctions and eigenvalues of the Laplacian $\Delta$ over the domain $\mathcal{X}$; see \Cref{app: eigenfunctions} for detail. 
The approximation converges as $m \rightarrow \infty$, with small values of $m$ often sufficient for accurate approximation; see \cite{riutortmayol2020practical}.
\texttt{GaussED} puts the user in control of $m$, since $m$ is the principal determinant of computational complexity in the experimental design engine, aside from the computations involving the latent function $\mathsf{f}$ itself.

\subsection{Hyperparameter Estimation} \label{subsec: hyperparam}

To this point we assumed that a \ac{GP} model can be specified at the outset.
In reality one is usually prepared only to posit a parametric class of \acp{GP} whose parameters (called \emph{hyperparameters}) are jointly estimated.
In \texttt{GaussED} the hyperparmaters of the \ac{GP} are estimated at each iteration $n \geq n_0$ of \ac{SED}, using the available dataset $\bm{\delta}_n(\mathsf{f})$, after an initial number $n_0 \in \mathbb{N}$ of data have been observed. 
Maximum likelihood estimation is employed, facilitated using automatic differentiation and Adam \citep{kingma2014adam}. 
The role of $n_0$ is to guard against over-confident inferences, since maximum likelihood tends to overfit when the dataset is small; see e.g. Chapter 5 of \citet{rasmussen2006}. In \texttt{GaussED}, the default value is taken as $n_0 = 10$.

This completes our description of \texttt{GaussED}. 
Our attention turns, next, to demonstrating and assessing its capabilities.

\section{Demonstration} \label{sec: demonstrations}

The aims of this section are to validate \texttt{GaussED} and to highlight the diverse and non-trivial applications that can be tackled.
\texttt{GaussED} is based on \texttt{Python} and utilises the automatic differentiation capabilities of \texttt{Pytorch} \citep{paszke2019pytorch}. Source code and documentation for \texttt{GaussED} can be downloaded from \url{https://github.com/MatthewAlexanderFisher/GaussED}.

Full details for each of the following examples are provided in \Cref{sec: experiment details}. 
An investigation into the sensitivity of the computational methodology to initial conditions, the choice of stochastic optimisation method, and the number of basis functions $m$, can be found in \Cref{sec: evaluating Gaussed}.

\subsection{Probabilistic Solution of PDEs} \label{subsec: heat equation}

Our first example concerns the probabilistic numerical solution of Poisson's equation with Dirichlet boundary conditions; the intention is to validate our methodology on a problem that is well-understood.
\ac{SED} for such problems was investigated with bespoke code in \cite{cockayne2016probabilistic}.
The PDE we consider is defined on $\mathcal{X} = [-1,1]^2$ and takes the form
\begin{alignat*}{2}
    \Delta \mathsf{f}(x) &= \mathsf{g}(x), &&\qquad x \in \mathcal{X}, \\
    \mathsf{f}(x) &= 0, &&\qquad x \in \partial\mathcal{X}.
\end{alignat*}
Our quantity of interest is the solution $\mathsf{f}$ and the black-box source $\mathsf{g}$ is assumed to be associated with a computational cost, so that numerical uncertainty quantification is required. 
For this demonstration we simply took
$$
\mathsf{g}(x) = -320| x_1^3 \exp\{-(3.2 x_1)^2 - (10x_2 - 5)^2\}| 
$$
as a test bed.
The latent $\mathsf{f}$ was modelled as a \ac{GP} $f$ with mean zero and Mat\'ern covariance with smoothness parameter $\nu = 3 + \frac{1}{2}$, ensuring the corresponding GP samples are almost surely contained in $C^3(\mathcal{X})$, implying the evaluations of the Laplacian of $f$ are continuous linear functionals (see \Cref{app subsec: sufficient matern}).
The design set $\mathcal{D}$, parameterised by $x\in \mathcal{X}$, consists of functionals of the form $\delta(\mathsf{f}) = \Delta \mathsf{f}(x)$. 
It is known that an optimal experimental design in this case is \textit{space filling} \citep{wendland_2004,Novak2010FillDistance}, as quantified by the \textit{fill distance}
\begin{equation*}
    \text{FD}(\{x_i\}_{i=1}^n, \mathcal{X}) \coloneqq \sup_{x \in \mathcal{X}} \Big\{ \min_{i \in \{1,\ldots,n\}} \|x - x_i\| \Big\} ,
\end{equation*}
and this fact will be used to validate \texttt{GaussED}.
The syntax of \texttt{GaussED} is demonstrated in \Cref{fig: syntax example}, and consists of specifying a covariance function (\texttt{k}), a quantity of interest (\texttt{qoi}), an observation model (\texttt{obs}), here the Laplacian (\texttt{Laplace}), a loss function (\texttt{loss}), a design (\texttt{d}) initialised with an \texttt{initial\_design}, and an acquisition function (\texttt{acq}). 
\texttt{BayesRisk} is the default acquisition function from \eqref{eq: final A}, but \texttt{GaussED} retains the capability for alternative acquisition functions in the event that they can be user-specified. 
The experiment object (\texttt{experiment}) then collates these objects together to perform $n=150$ iterations of \ac{SED}, optimising hyperparameters as specified in \Cref{subsec: hyperparam}. 

\begin{figure}
\centering
\begin{minted}[
frame=lines,
framesep=5mm,
baselinestretch=1.2,
fontsize=\footnotesize,
%linenos
]{python}
k = MaternKernel(3, dim=2)
qoi = SpectralGP(k)
obs = Laplacian(qoi)
loss = L2(qoi)

d = EvaluationDesign(obs, initial_design)
acq = BayesRisk(qoi, loss, d)

experiment = Experiment(obs, laplace_f, d, acq)
experiment.run(n=150)
\end{minted}
\caption{Example syntax for \texttt{GaussED}.
} \label{fig: syntax example}
\end{figure}

Results are shown in \Cref{fig: poisson design 2D} and required only the $8$ lines of code shown in \Cref{fig: syntax example}.
The number of basis functions used was $m=30^2$, we computed $n_0 = 10$ iterations of SED before beginning hyperparameter optimisation and a total of $9$ CPU hours were invested to ensure that all $n=150$ instances of stochastic optimisation converged. 
The fill distance is lower-bounded by $\Theta(n^{-1/2})$, and \Cref{fig: fill distance} demonstrates that this optimal rate is empirically achieved by \texttt{GaussED}.
This validates our approach to \ac{SED}.

\begin{figure}[t!]
   \centering
    \begin{subfigure}[b]{0.25\linewidth}
    \includegraphics[width=\textwidth]{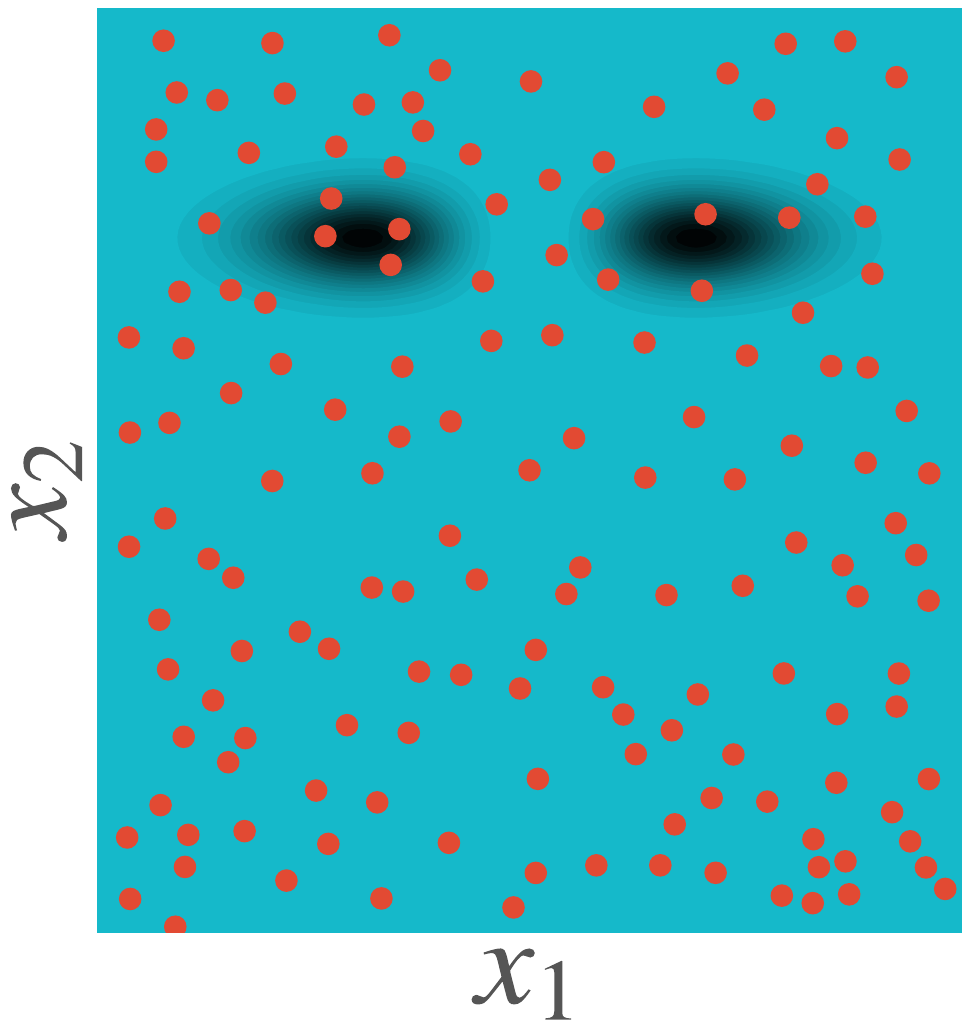} 
    \caption{}
    \end{subfigure}
   \begin{subfigure}[b]{0.25\linewidth}
    \includegraphics[width=\textwidth]{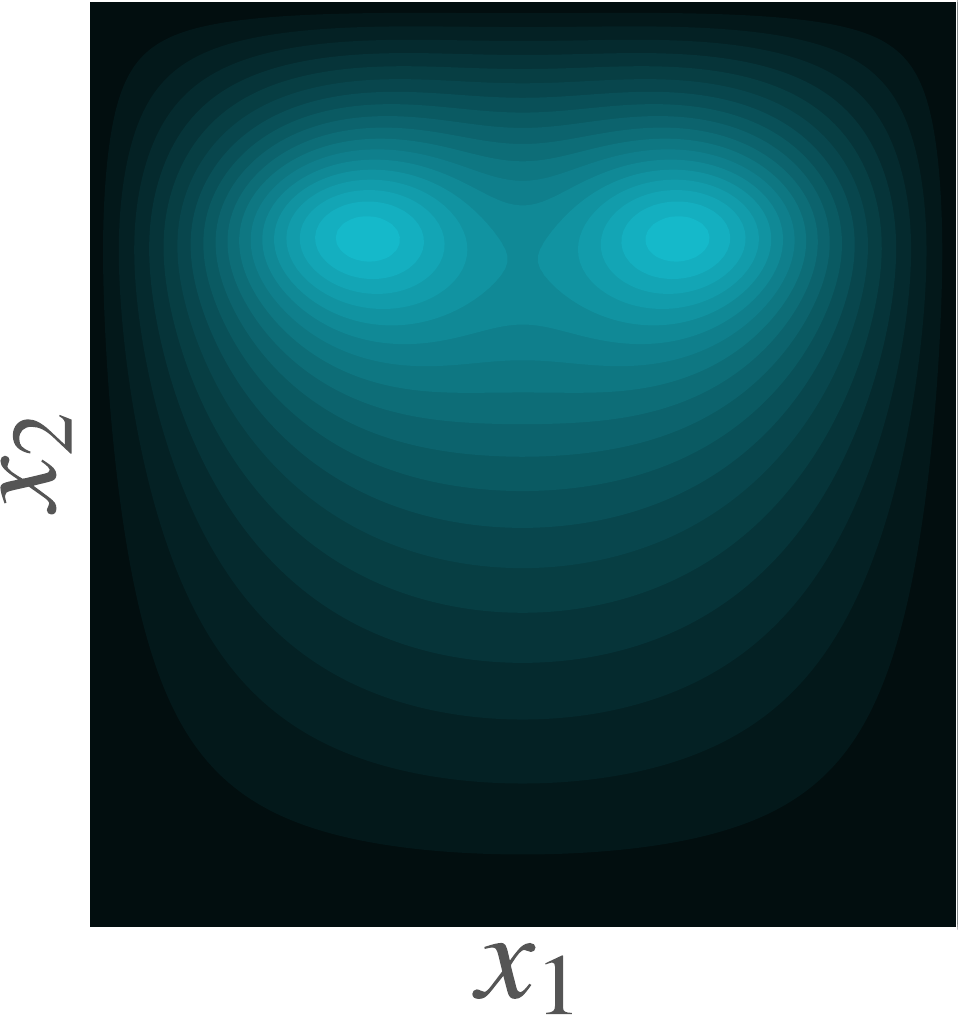} 
    \caption{}
    \end{subfigure}
    \begin{subfigure}[b]{0.25\linewidth} 
    \includegraphics[width=\textwidth]{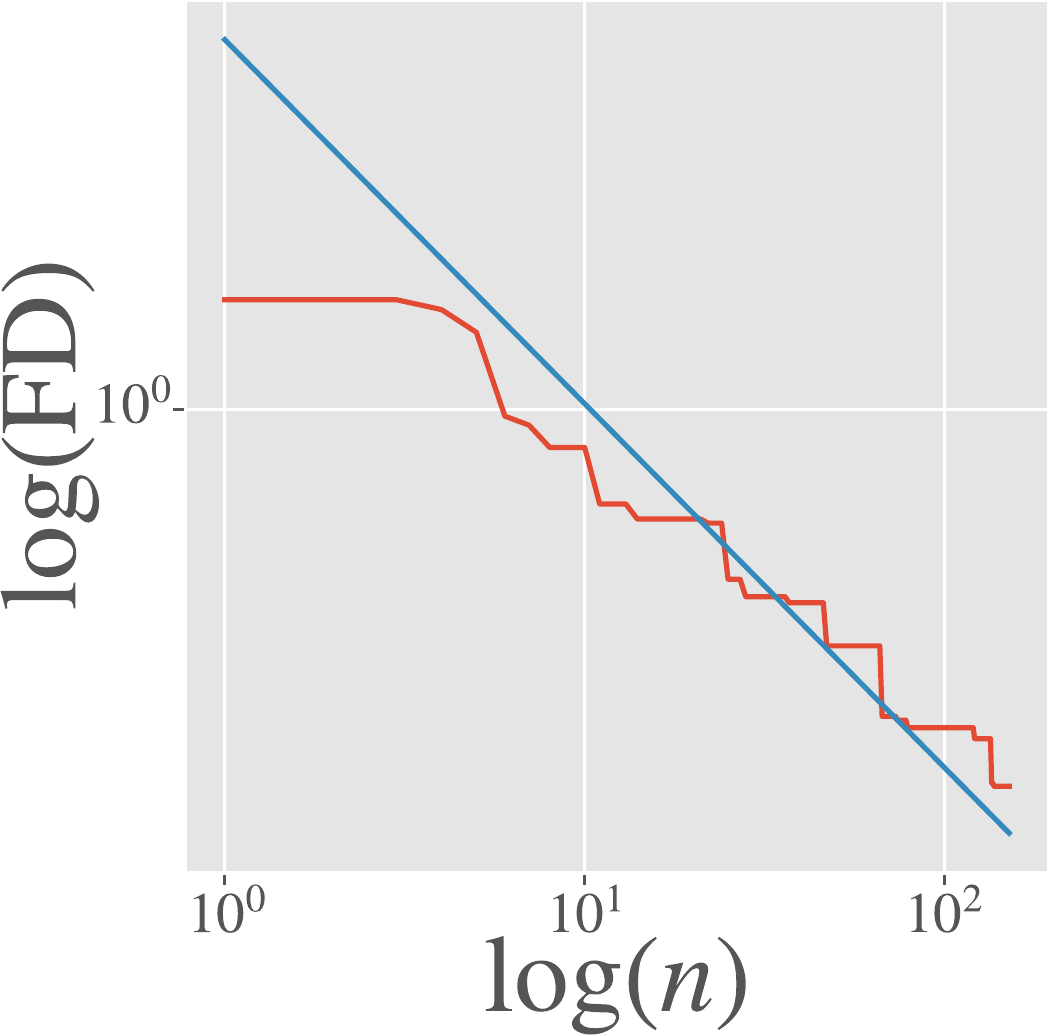}
    \caption{} \label{fig: fill distance}
    \end{subfigure}
    
\caption{
\textit{Probabilistic Solution of PDEs:}
    (a) Source term $\mathsf{g}$ with design points (red) determined by \ac{SED} overlaid.
    (b) Mean of $f | \bm{\delta}_n(\mathsf{f})$, the posterior obtained using \ac{SED}.
    (c) Fill distance (FD; red) versus the number $n$ of iterations in \ac{SED}, with theoretical optimal slope $-\frac{1}{2}$ (blue) displayed.
}
\label{fig: poisson design 2D}
\end{figure}

\subsection{Tomographic Reconstruction} \label{subsec: tomographic reconstruction}

Our next example is tomographic reconstruction from x-ray data \citep{mersereau1974reconstruction}.
The aim is to reconstruct a latent function $\mathsf{f}:\mathcal{X} \rightarrow\mathbb{R}$, where $\mathcal{X} = [-1,1]^d$, using line-integral data of the form
\begin{equation*}
    \delta(\mathsf{f}) = \textstyle \int_a^b \mathsf{f}(r(t))\left|r'(t)\right| \, \mathrm{d}t,
\end{equation*}
where $r(t)$, $t \in [a,b]$, is a parameterisation of a line with endpoints $r(a), r(b) \in \partial \mathcal{X}$. 
\ac{SED} for this problem was recently addressed, using bespoke code, in \cite{Burger_2021} and \cite{helin2021edgepromoting}. 
Following \citet{Burger_2021}, an experiment consists of a set of $9$ parallel line integrals across $\mathcal{X}$, with lines a perpendicular distance of $0.03$ apart.
As a toy example, we consider tomographic reconstruction of an indicator function $\mathsf{f}(x) = \mathbbm{1}_B(x)$ where $B$ is the ball of radius $0.3$ centred on $(0.4,0.4)$. 

For our statistical model $f$ we used a stationary \ac{GP} with Mat\'ern covariance and smoothness parameter $\nu = 2 + \frac{1}{2}$, and the non-linear quantity of interest was $q(\mathsf{f}) = \exp(3 \mathsf{f})$ which, when combined with squared error loss, serves to prioritise the reconstruction of the ball in \ac{SED}.
See \Cref{subsec: tomographic details} for full detail.

\begin{figure*}[t!]
   \centering
    \includegraphics[width=\textwidth]{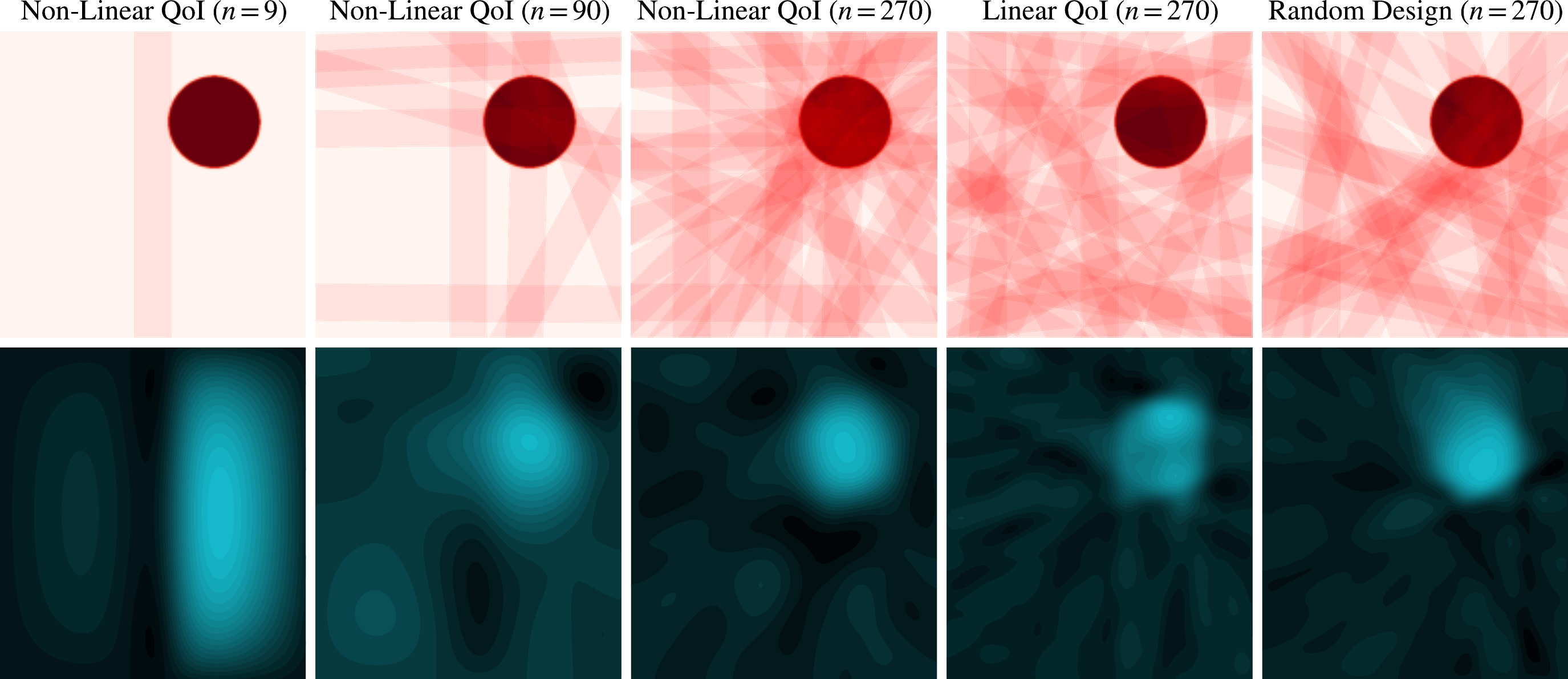} 
    
\caption{
    \textit{Tomographic Reconstruction:} The top row displays experimental designs, overlaying the latent $\mathsf{f}$. 
    Each red bar indicates the region over which $9$ equally-spaced line integrals were computed. The bottom row displays the corresponding mean of $f | \bm{\delta}_n(\mathsf{f})$, the posterior obtained using (from left to right): \ac{SED} with non-linear quantity of interest ($n = 9$, 90, 270), \ac{SED} with linear quantity of interest ($n=270$), and a random design ($n=270$).
}
\label{fig: tomographic reconstruction}
\end{figure*}

Results are shown in \Cref{fig: tomographic reconstruction} and only 32 lines of code were required.
In this experiment, we used $m = 28^2$ basis functions and began optimising hyperparameters at SED iteration $n = 1$. In total, 2.5 CPU hours were required.
\ac{SED} using \texttt{GaussED} provides improved reconstruction compared to a random design (right panel).
As an additional comparison, we also performed \ac{SED} with the linear quantity of interest $q(\mathsf{f}) = \mathsf{f}$ and a space-filling design was obtained.
Exploratory investigation of this kind is straight-forward in \texttt{GaussED}.

\subsection{Gradient-Based Bayesian Optimisation} \label{subsec: lotka volterra}

Our next example uses Bayesian optimisation to perform parameter inference via maximum likelihood, and for this we consider the Lotka--Volterra model
\begin{align}
\label{eq: LV}
\begin{split}
    \textstyle \frac{\mathrm{d}p}{\mathrm{d}t} &= \alpha p - \beta pq, \\ 
    \textstyle \frac{\mathrm{d}q}{\mathrm{d}t} &= - \gamma q + \delta pq , \\
\end{split}
\end{align}
where $p(t),q(t) > 0$ are the predator and prey populations, respectively, at time $t$ and $\alpha,\beta,\gamma$ and $\delta$ are free parameters to be inferred. 
To facilitate visualisation of experimental designs we consider inferring only $\alpha$ and $\beta$, 
which we collect in a single parameter vector $x = (\alpha,\beta)$. 
For this demonstration we restrict attention to $\mathcal{X} = [0.45,0.9] \times [0.09,0.5]$, to avoid failure of the numerical integrator applied to \eqref{eq: LV}.
The remaining parameters, $\gamma$ and $\delta$, are then taken as fixed. 
Our latent function $\mathsf{f}$ is the log-likelihood, denoted $\mathsf{f} = \log \mathcal{L}$, arising from a particular dataset of noise-corrupted observations described in \Cref{subsec: lotka-volterra details}.
Our quantity of interest is the maximum likelihood estimator $q(\mathsf{f}) = \max_{x \in\mathcal{X}} \mathsf{f}(x)$. 
The design set $\mathcal{D}$ contains pointwise evaluation functionals $\delta_x^1(\mathsf{f}) = \log \mathcal{L}(x)$ and gradient evaluation functionals $\delta_x^{2,i}(\mathsf{f}) = \nabla_{x_i} \log \mathcal{L}(x)$, and at each iteration of \ac{SED} we evaluate $(\delta_x^1(\mathsf{f}),\delta_x^{2,1}(\mathsf{f}),\delta_x^{2,2}(\mathsf{f}))$ for some $x \in \mathcal{X}$, mimicking the information provided when \eqref{eq: LV} is solved using an adjoint method.
Through a suitable sequence of evaluation functionals, \ac{SED} aims to approximate the maximum likelihood estimator.

\begin{figure*}[t!]
\centering
\begin{subfigure}[b]{0.225\linewidth}
\includegraphics[width=\textwidth]{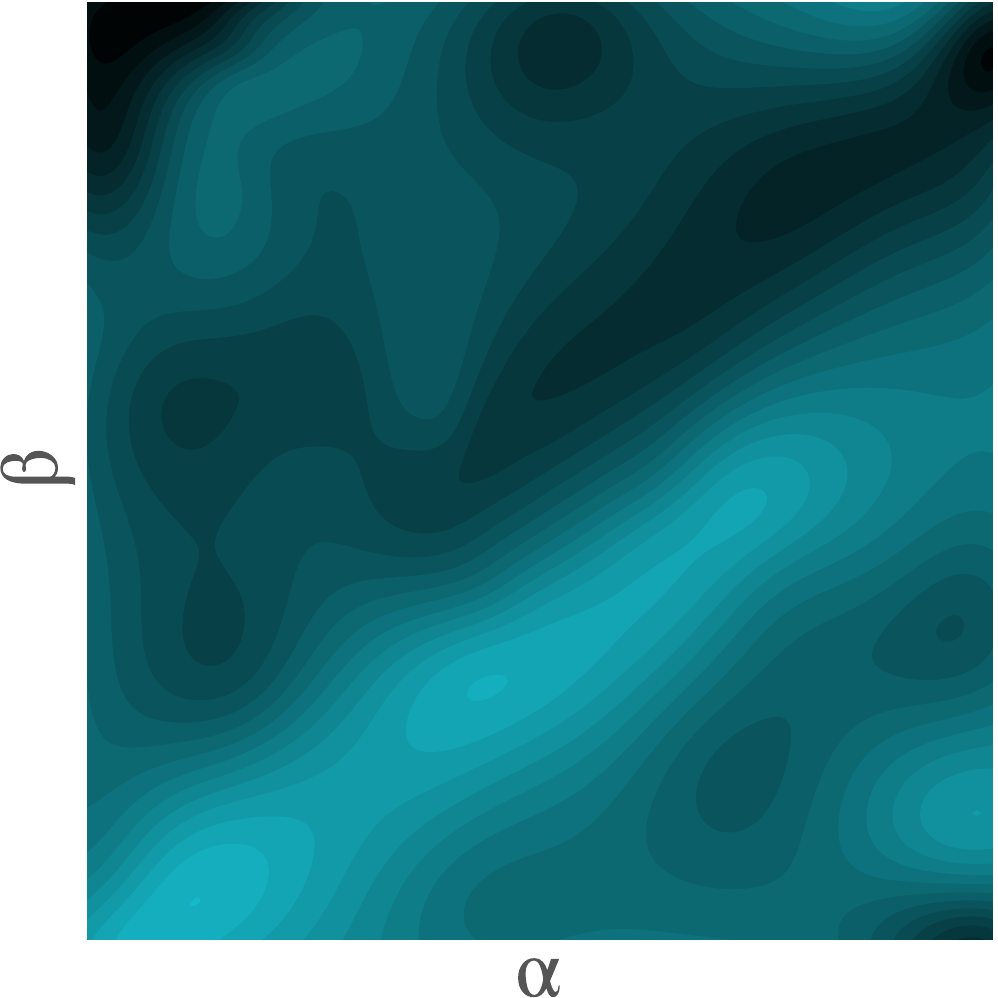} 
\caption{}
\end{subfigure}
\begin{subfigure}[b]{0.24\linewidth} 
\includegraphics[width=\textwidth]{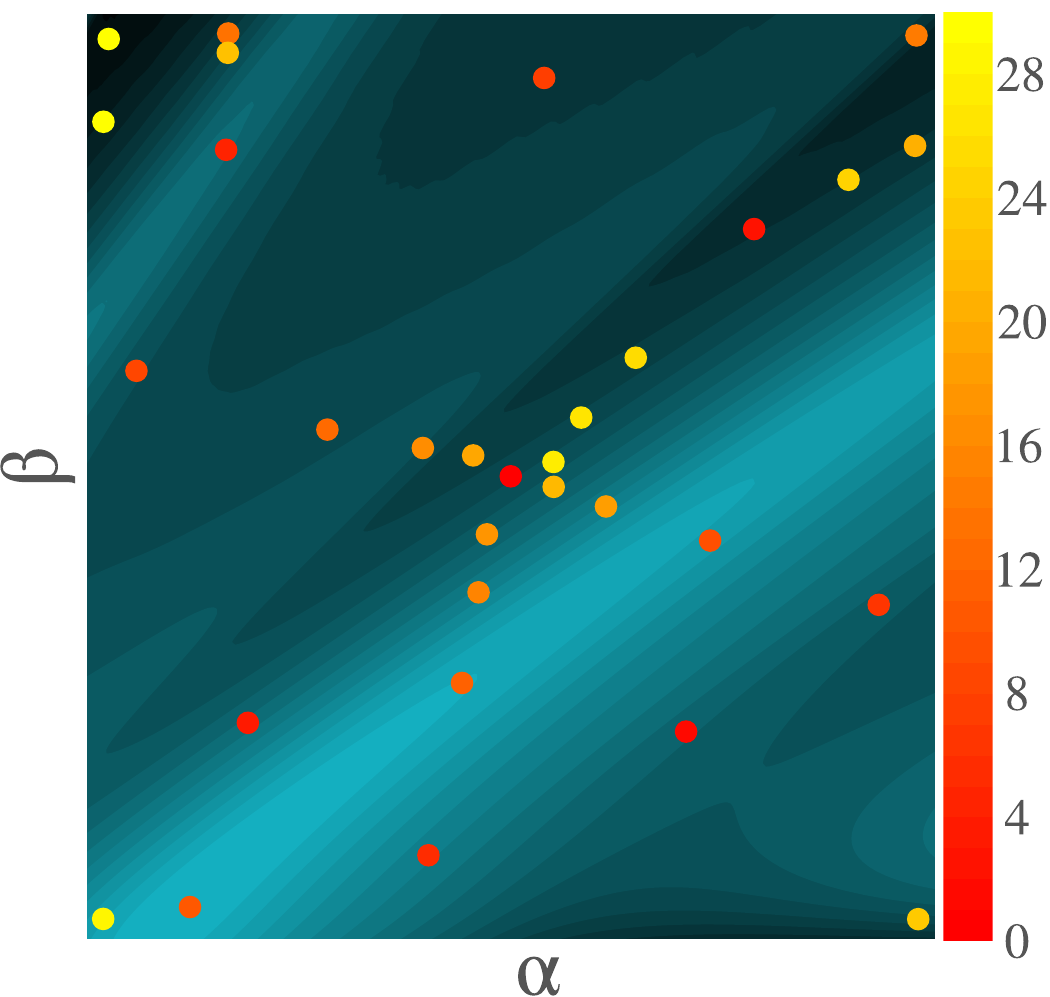}
\caption{} \label{fig: lv design bo}
\end{subfigure}
\begin{subfigure}[b]{0.24\linewidth}
\includegraphics[width=\textwidth]{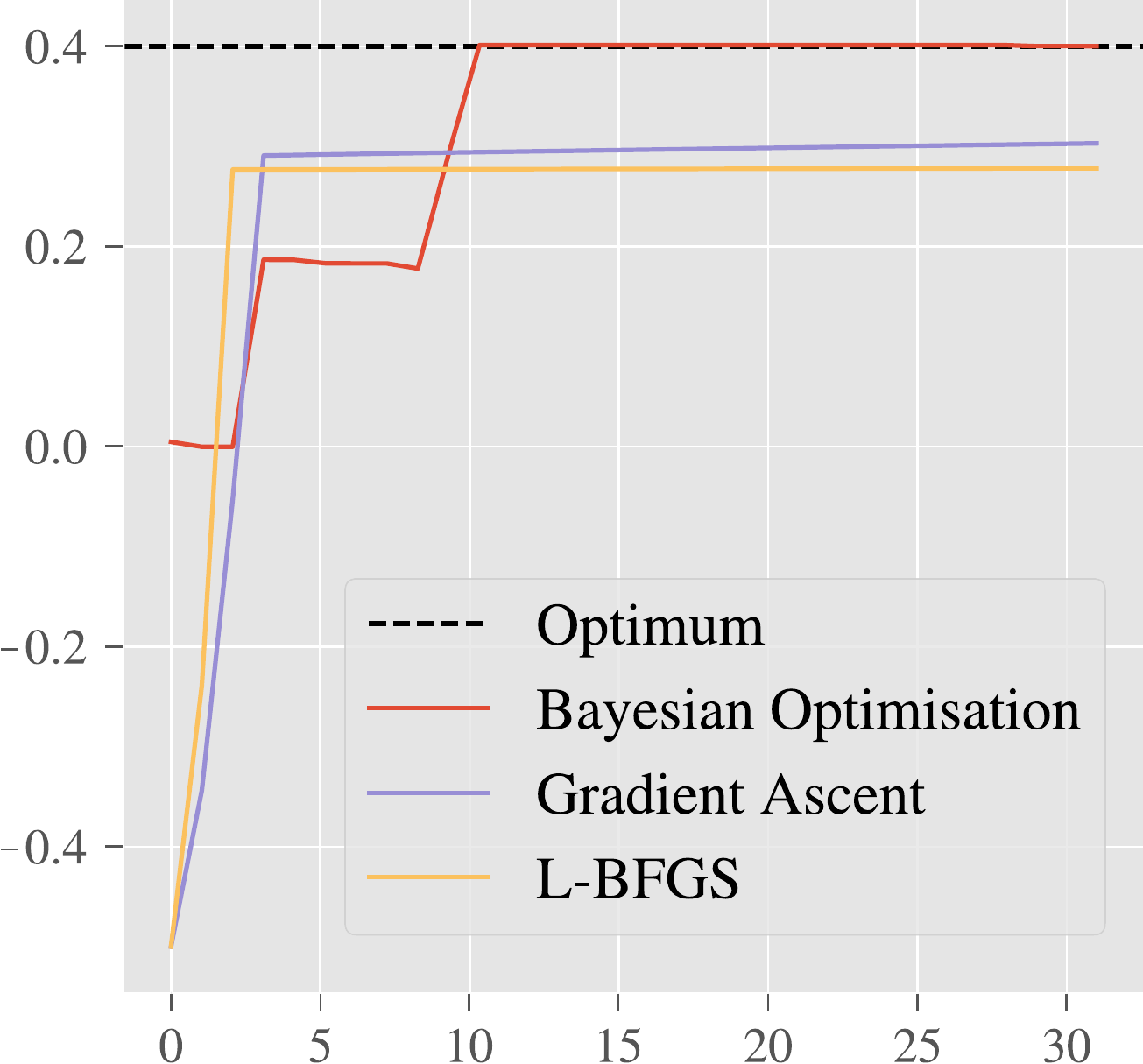} 
\caption{} \label{fig: lv convergence a}
\end{subfigure}
\begin{subfigure}[b]{0.24\linewidth}
\includegraphics[width=\textwidth]{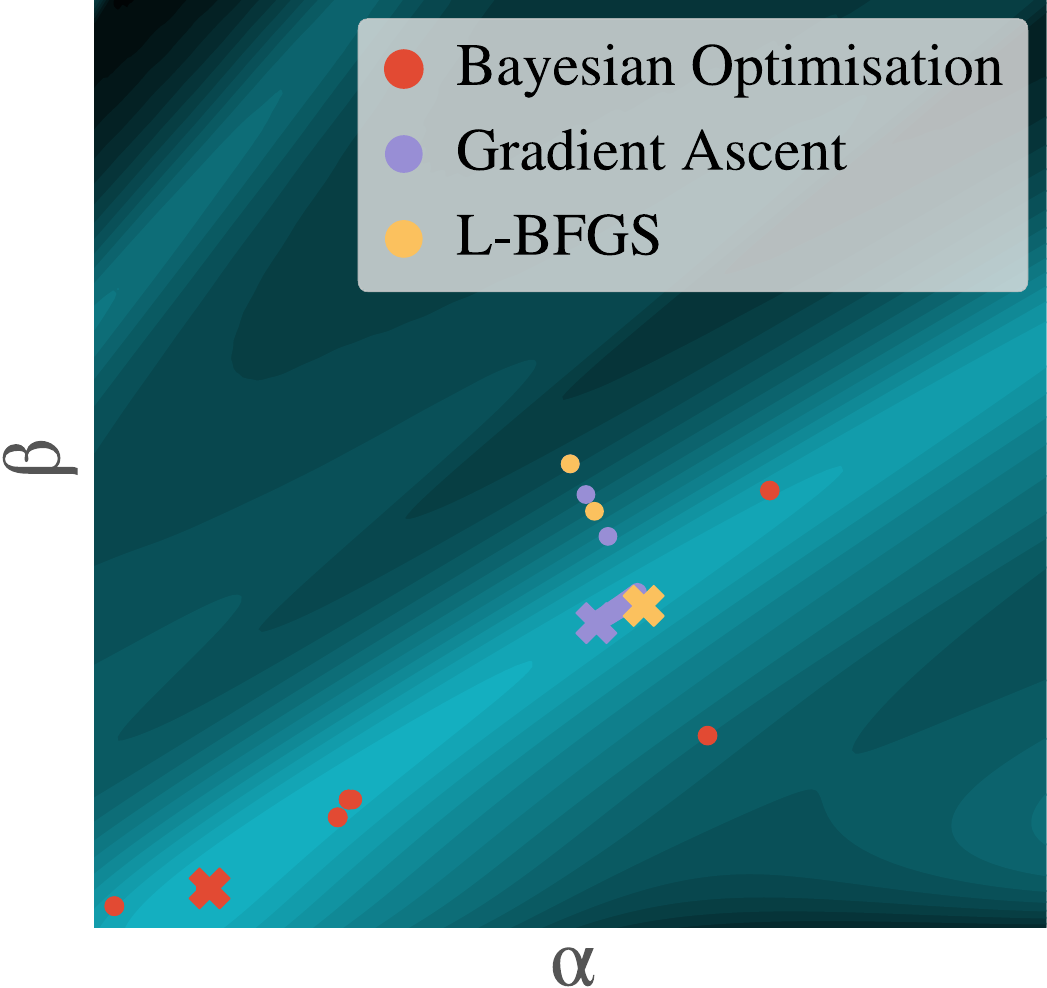}
\caption{} 
\end{subfigure}
\caption{ 
\textit{Gradient-Based Bayesian Optimisation:}
(a) Mean of $f | \bm{\delta}_n(\mathsf{f})$, the posterior after $n=90$ total evaluations.
(b) Log-likelihood $\mathsf{f}$, with design points overlaid. 
Colour indicates the order in which points were selected in \ac{SED}.
(c) Maximum value of the likelihood obtained during the first $m$ iterations of each optimisation method.
(d) Location of the maximum value along the optimisation path, where the colored \ding{54} symbols indicate the maximum value obtained (for Bayesian optimisation, the maximum of the posterior mean is reported).
}
\label{fig: lotka_volterra_error}   
\end{figure*}

Results are shown in \Cref{fig: lotka_volterra_error} and only 17 lines of code were required.
In this experiment we used $m=35^2$ basis functions, we computed $n_0 = 10$ iterations of SED before beginning hyperparameter optimisation and 1.5 CPU hours were required.
For reference, results based on gradient ascent and L-BFGS \citep{nocedal1980LBFGS} are also displayed. 
All algorithms were initialised at the midpoint of the domain $\mathcal{X}$ and run for $n=30$ iterations.
Bayesian optimisation with gradient data outperformed the first order optimisation methods in this example, where attention is focused on performance after a small number of likelihood evaluations, to mimic more challenging applications in which the likelihood is associated with a more substantial computational cost. 

\section{Discussion} \label{sec: discussion}

This paper introduced \texttt{GaussED}, a simple \ac{PPL} coupled to a powerful engine for \ac{SED}.
Through four experiments we illustrated the diverse applications that can be automatically solved using \texttt{GaussED}.
However, automation of \ac{SED} comes at a cost:
Firstly, \texttt{GaussED} is restricted both to continuous linear functional data and to \acp{GP}, limiting the potential for more flexible statistical models to be employed.
Alternative \acp{PPL}, such as \texttt{Emukit}, offer more modelling flexibility but require acquistion functions to be manually specified.
Secondly, in automating the specification of an acquisition function in \texttt{GaussED}, there may be a loss in performance terms compared to bespoke solutions for specific tasks.
Our experiments involving Bayesian optimisation in \Cref{subsec: lotka volterra} were encouraging, however, and suggested that such performance gaps, if they do exist, may be acceptably small.
One role for \texttt{GaussED} in these settings is to provide an off-the-shelf benchmark for \ac{SED}, against which more sophisticated methods can be compared.

\subsubsection*{Acknowledgements}
MAF was supported by the EPSRC Centre for Doctoral Training in Cloud Computing for Big Data EP/L015358/1 at Newcastle University, UK.
CJO was supported by the Lloyd's Register Foundation programme on data-centric engineering at the Alan Turing Institute, UK.
The authors thank Maren Mahsereci, Tim Sullivan and Darren Wilkinson for valuable insight.

\bibliographystyle{apalike}
\bibliography{References}

\newpage
\appendix
\newpage

\appendix

\onecolumn

\section*{Supplement}

The supplement is structured as follows:
\begin{itemize}
    \item \Cref{app: math prelim} contains the mathematical preliminaries for the subsequent sections \Cref{app: decision theory} and \Cref{app sec: properties of gaussian processes}.
    \item \Cref{app: decision theory} presents the conditions for the equivalence of \eqref{eq: U bayes risk} and \eqref{eq: rewrite U} as advertised in \Cref{subsec: decision theoretic approach}.
    \item \Cref{app sec: properties of gaussian processes} presents the formal background of conditioning on continuous linear data for Gaussian processes and presents properties of the Mat\'ern covariance function.
    \item \Cref{app: eigenfunctions} presents a derivation of the Gaussian process model that forms the foundation of \texttt{GaussED}. 
    \item \Cref{sec: gaussed details} discusses computational aspects of \texttt{GaussED}. In particular, we discuss linear algebra solvers, different approaches to sampling from the posterior, and we present a complete description of how \texttt{GaussED} attempts to optimise the acquisition function in SED.
    \item \Cref{sec: experiment details} contains full details of the experiments presented in \Cref{sec: demonstrations}. \Cref{subsec: heat equation details} details the \ac{PDE} experiment presented in \Cref{subsec: heat equation}. \Cref{subsec: tomographic details} details the tomographic reconstruction experiment presented in \Cref{subsec: tomographic reconstruction}. \Cref{subsec: lotka-volterra details} details the Lotka--Volterra experiment presented in \Cref{subsec: lotka volterra}. 
    \item \Cref{sec: evaluating Gaussed} presents further empirical evaluation of \texttt{GaussED}. \Cref{subsec: stochastic optimisation investigation} presents an empirical investigation of the stochastic optimisation methods used in SED. \Cref{subsec: basis function investigation} presents an empirical investigation on how the number of basis functions affects the quality of inference.
\end{itemize}

\section{Mathematical Preliminaries}
\label{app: math prelim}

In this section we present the mathematics required to ensure that the conditioning of stochastic processes in the main text is well-defined (\Cref{subsec: cond as disint}), as well as recalling the concept of a Fr\'{e}chet derivative (\Cref{subsec: Frechet}).

\subsection{Conditioning as Disintegration} \label{subsec: cond as disint}

In finite dimensions, conditioning of random variables can be performed using the density formulation of Bayes' theorem.
However, typical stochastic processes will be infinite-dimensional, meaning that (Lebesgue) densities do not exist in general.
This necessitates a level of mathematical abstraction to ensure that conditional probabilities are well-defined.
The appropriate notion, for this work, is that of \emph{disintegration}, defined next.

Let $(\mathcal{F},\mathcal{S}_{\mathcal{F}})$ and $(\mathcal{Y},\mathcal{S}_{\mathcal{Y}})$ be measurable spaces and let $\delta$ be a measurable function from $\mathcal{F}$ to $\mathcal{Y}$.
Recall that $\delta^{-1}(S) = \{f \in \mathcal{F} : \delta(f) \in S\}$ denotes the pre-image of $S \in \mathcal{S}_{\mathcal{Y}}$.
Let $\mathbb{P}$ be a probability measure on $(\mathcal{F},\mathcal{S}_{\mathcal{F}})$ and recall that $\delta_\# \mathbb{P}$ denotes the \emph{pushforward} measure $(\delta_\# \mathbb{P})(S) := \mathbb{P}(\delta^{-1}(S))$ on $\mathcal{Y}$.

\begin{definition} \label{def: disintegrate}
The collection $\{\mathbb{P}(\cdot | y ) \}_{y \in \mathcal{Y}}$ is called a $\delta$-\emph{disintegration} of $\mathbb{P}$ if
\begin{enumerate}
    \item $\mathbb{P}( \delta^{-1}(y) | y ) = 1$ for $\delta_\# \mathbb{P}$ almost all $y \in \mathcal{Y}$
\end{enumerate}
and, for each measurable function $g : \mathcal{F} \rightarrow [0,\infty)$, we have
\begin{enumerate}
    \setcounter{enumi}{1}
    \item $y \mapsto \int g(\mathrm{f}) \mathrm{d}\mathbb{P}(\mathrm{f} | y)$ is measurable
    \item $\int g(\mathrm{f}) \mathrm{d}\mathbb{P}(\mathrm{f}) = \int \int g(\mathrm{f}) \mathrm{d}\mathbb{P}(\mathrm{f} | y) \mathrm{d} \delta_\# \mathbb{P}(y)$
\end{enumerate}
\end{definition}

A disintegration is a particular instance of a \emph{regular conditional distribution} which also satisfies property (1) in \Cref{def: disintegrate}; see \cite{chang1997conditioning}.
A basic theorem on the existence and $\delta_\# \mathbb{P}$ almost everywhere uniqueness of disintegrations is given in \citet[][p147]{parthasarathy2005probability}.
Two disintegrations will be identified if they coincide $\delta_\# \mathbb{P}$ almost everywhere, and we will therefore refer to \emph{the} $\delta$-disintegration of $\mathbb{P}$.
The concept of disintegration makes precise what it means to ``condition \acp{GP} on data'', as discussed in \Cref{subsec: disintegration of gaussian}.

\subsection{Fr\'{e}chet Derivatives} \label{subsec: Frechet}

Recall that $\mathcal{F}$ was defined as a normed vector space, meaning that the notion of a \emph{Fr\'{e}chet derivative} can be exploited.
A function $q : \mathcal{F} \rightarrow \mathbb{R}^d$ is called \emph{Fr\'{e}chet differentiable} at $\mathrm{f} \in \mathcal{F}$ if there exists a bounded linear operator $A : \mathcal{F} \rightarrow \mathbb{R}^d$ such that
$$
\lim_{\|\mathrm{g}\| \rightarrow 0} \frac{\|q(\mathrm{f} + \mathrm{g}) - q(\mathrm{f}) - A(\mathrm{g})\|}{\|\mathrm{g}\|} = 0 .
$$
If such an operator exists it can be shown to be unique, called the \emph{Fr\'{e}chet derivative} of $q$ at $\mathrm{f}$, and denoted $\mathrm{D}q(\mathrm{f}) = A$.
To emphasise that the Fr\'{e}chet derivative is an operator, we occasionally write $\mathrm{D}q(\mathrm{f})(\cdot)$ in the sequel.
A Fr\'{e}chet derivative $\mathrm{D}q(f)$ is said to have \emph{full rank} if $\mathrm{D}q(\mathrm{f})(\mathrm{g}) = 0$ implies $\mathrm{g} = 0$.

The chain rule for Fr\'{e}chet derivatives takes the form
$$
\mathrm{D}(b \circ a)(\mathrm{f})(\cdot) = (\mathrm{D} b \circ a)(\mathrm{f}) \circ \mathrm{D} a (\mathrm{f})(\cdot) .
$$
As a concrete example, that we use later, consider $a(\mathrm{f}) = q(\mathrm{f})$ to be the quantity of interest and $b(x) = \|x - q(\mathrm{g})\|^2$ for all $x \in \mathbb{R}^d$ and some fixed $g \in \mathcal{F}$.
Then $\mathrm{D}b(x)(\cdot) = 2 \langle x - q(\mathrm{g}) , \cdot \rangle$ is a linear operator from $\mathbb{R}^d$ to $\mathbb{R}$ and we have
\begin{align}
    \mathrm{D}(b \circ a)(\mathrm{f})(\cdot) = 2 \langle q(\mathrm{f}) - q(\mathrm{g}) , \mathrm{D} q(\mathrm{f})(\cdot) \rangle , \label{eq: Fre chain}
\end{align}
which is a linear operator from $\mathcal{F}$ to $\mathbb{R}$.
Further background on Fr\'{e}chet derivatives can be found in \citet[][Section 2.1C]{berger1977nonlinearity}.

An important technical result on Fr\'{e}chet derivatives, that we will use in the sequel, is when the interchange of a Fr\'{e}chet derivative and an integral can be permitted:

\begin{proposition} \label{prop: change order}
Let $\mathcal{F}$ be complete (i.e. a Banach space) and $(\Omega,\mathcal{S},\mathbb{P})$ be a probability space.
Let $\ell : \mathcal{F} \times \Omega \rightarrow \mathbb{R}$ satisfy the following:
\begin{enumerate}
    \item $\mathrm{f} \mapsto \ell(\mathrm{f},\omega)$ is Fr\'{e}chet differentiable, for each $\omega \in \Omega$
    \item $\omega \mapsto \ell(\mathrm{f},\omega)$ is integrable, for each $\mathrm{f} \in \mathcal{F}$
    \item $\omega \mapsto \mathrm{D}\ell(\mathrm{f},\omega)(\mathrm{g})$ is integrable, for each $\mathrm{f}, \mathrm{g} \in \mathcal{F}$
    \item $\int \|\mathrm{D} \ell(\mathrm{f},\omega) \| \mathrm{d}\mathbb{P}(\omega) < \infty$
\end{enumerate}
Then the function
$$
r(\mathrm{f}) := \int \ell(\mathrm{f},\omega) \mathrm{d}\mathbb{P}(\omega)
$$
is Fr\'{e}chet differentiable, with derivative
$$
\mathrm{D}r(\mathrm{f})(\cdot) = \int \mathrm{D}\ell(\mathrm{f},\omega)(\cdot) \mathrm{d}\mathbb{P}(\omega).
$$
\end{proposition}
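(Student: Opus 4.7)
My plan is to verify the candidate identity by (i) showing the right-hand side defines a bounded linear operator, and (ii) establishing the Fréchet limit by reducing it to a dominated-convergence argument. The candidate derivative $A : \mathcal{F} \to \mathbb{R}$ is defined by $A(\mathrm{g}) := \int \mathrm{D}\ell(\mathrm{f},\omega)(\mathrm{g}) \, \mathrm{d}\mathbb{P}(\omega)$. Well-definedness is given by hypothesis (3). Linearity follows because $\mathrm{D}\ell(\mathrm{f},\omega)$ is linear for each $\omega$ and the integral is linear. Boundedness follows from the elementary estimate $|\mathrm{D}\ell(\mathrm{f},\omega)(\mathrm{g})| \leq \|\mathrm{D}\ell(\mathrm{f},\omega)\| \cdot \|\mathrm{g}\|$ together with hypothesis (4), which gives $|A(\mathrm{g})| \leq \|\mathrm{g}\| \int \|\mathrm{D}\ell(\mathrm{f},\omega)\| \, \mathrm{d}\mathbb{P}(\omega) < \infty$.

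For the Fréchet limit, I would take an arbitrary non-zero sequence $\mathrm{g}_n \in \mathcal{F}$ with $\|\mathrm{g}_n\| \to 0$ and define the per-$\omega$ remainder
\begin{equation*}
R_n(\omega) := \ell(\mathrm{f} + \mathrm{g}_n, \omega) - \ell(\mathrm{f}, \omega) - \mathrm{D}\ell(\mathrm{f}, \omega)(\mathrm{g}_n).
\end{equation*}
Linearity of the integral gives $r(\mathrm{f} + \mathrm{g}_n) - r(\mathrm{f}) - A(\mathrm{g}_n) = \int R_n(\omega) \, \mathrm{d}\mathbb{P}(\omega)$, and the goal is to show $\|\mathrm{g}_n\|^{-1} \int R_n(\omega) \, \mathrm{d}\mathbb{P}(\omega) \to 0$. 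Hypothesis (1) ensures that for each fixed $\omega$, $|R_n(\omega)|/\|\mathrm{g}_n\| \to 0$ pointwise. It therefore suffices to apply dominated convergence to the integrand $|R_n(\omega)|/\|\mathrm{g}_n\|$.

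The main obstacle is producing the required $\mathbb{P}$-integrable dominating function, since Fréchet differentiability at $\mathrm{f}$ only gives pointwise control. The strategy is to invoke the mean value inequality for Fréchet derivatives on the Banach space $\mathcal{F}$: applied to the scalar map $t \mapsto \ell(\mathrm{f} + t\mathrm{g}_n, \omega)$ and written in remainder form, it yields
\begin{equation*}
|R_n(\omega)| \leq \|\mathrm{g}_n\| \cdot \sup_{t \in [0,1]} \bigl\| \mathrm{D}\ell(\mathrm{f} + t\mathrm{g}_n, \omega) - \mathrm{D}\ell(\mathrm{f}, \omega) \bigr\|.
\end{equation*}
Combining this with the triangle bound gives $|R_n(\omega)|/\|\mathrm{g}_n\| \leq 2\|\mathrm{D}\ell(\mathrm{f},\omega)\| + \sup_t \|\mathrm{D}\ell(\mathrm{f}+t\mathrm{g}_n,\omega)\|$ for large $n$ (when $\mathrm{g}_n$ is in a fixed bounded neighbourhood of $0$), which supplies an integrable envelope whenever the derivative is locally bounded in the sense of hypothesis (4). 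Dominated convergence then delivers $\int |R_n(\omega)|/\|\mathrm{g}_n\| \, \mathrm{d}\mathbb{P}(\omega) \to 0$, and hence the desired Fréchet derivative identity, since the sequence $\mathrm{g}_n$ was arbitrary.

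The hardest step is the construction of the dominating function: the statement of hypothesis (4) is at the base point $\mathrm{f}$ only, so the argument implicitly relies on local integrability of $\omega \mapsto \sup_{t \in [0,1]} \|\mathrm{D}\ell(\mathrm{f}+t\mathrm{g},\omega)\|$ for $\mathrm{g}$ sufficiently small, which I would flag and interpret as a local strengthening of condition (4) needed to make the interchange rigorous. Once that is in place, the rest of the proof is a mechanical application of dominated convergence, uniqueness of the Fréchet derivative, and the boundedness verified in the first paragraph.
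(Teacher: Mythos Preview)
The paper does not actually prove this proposition: its proof reads, in full, ``A special case of \citet{Frechet}.'' So there is no substantive argument in the paper to compare your approach against.

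Your approach is the standard one for differentiation under the integral sign in a Banach-space setting, and the structure (verify the candidate is a bounded linear functional, then control the remainder via a mean-value estimate and dominated convergence) is exactly right. You are also right to flag the gap: hypothesis (4) as written only asserts integrability of $\omega \mapsto \|\mathrm{D}\ell(\mathrm{f},\omega)\|$ at the single base point $\mathrm{f}$, and this is genuinely insufficient to manufacture a dominating function for $|R_n(\omega)|/\|\mathrm{g}_n\|$. Your envelope $\sup_{t\in[0,1]}\|\mathrm{D}\ell(\mathrm{f}+t\mathrm{g}_n,\omega)\|$ depends on $n$ and is not controlled by (4); one needs something like $\omega \mapsto \sup_{\|\mathrm{h}\|\leq\varepsilon}\|\mathrm{D}\ell(\mathrm{f}+\mathrm{h},\omega)\|$ to be integrable for some $\varepsilon>0$, which is the usual hypothesis in rigorous statements of this result. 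So the difficulty you identify is a deficiency in the proposition's hypotheses rather than in your proof strategy, and your instinct to interpret (4) as a local-uniform strengthening is the correct fix. One minor point: your triangle-inequality bound should read $\|\mathrm{D}\ell(\mathrm{f},\omega)\| + \sup_t\|\mathrm{D}\ell(\mathrm{f}+t\mathrm{g}_n,\omega)\|$ rather than with the factor of $2$, though this does not affect the argument.
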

\begin{proof}
A special case of \citet{Frechet}.
\end{proof}

\section{Regularity Conditions for the Decision Theoretic Formulation} \label{app: decision theory}

The aim in this section is to establish sufficient conditions for the equivalence of \eqref{eq: U bayes risk} and \eqref{eq: rewrite U} as advertised in \Cref{subsec: decision theoretic approach}.
To achieve this, we will use the notion of a Fr\'{e}chet derivative from in \Cref{subsec: Frechet}.
Our sufficient conditions are presented in \Cref{subsec: suffic cond}.
A short discussion of the strength of these conditions is contained in \Cref{subsec: discuss conds}.

\subsection{From Optimisation to Expectation} \label{subsec: suffic cond}

Firstly, we rigorously establish an infinite-dimensional analogue of the classical result that the posterior mean is a Bayes act for squared error loss:

\begin{proposition} \label{prop: Bayes act}
Let $L(\mathrm{f},\mathrm{g}) = \|q(\mathrm{f}) - q(\mathrm{g})\|^2$.
Assume that $\mathcal{F}$ is complete (i.e. a Banach space) and that:
\begin{enumerate}
    \item[{\normalfont (A1)}]  $q : \mathcal{F} \rightarrow \mathbb{R}^d$ is Fr\'{e}chet differentiable;
    \item[{\normalfont (A2)}] the Fr\'{e}chet derivative $\mathrm{D}q(\mathrm{f})$ has full rank at all $\mathrm{f} \in \mathcal{F}$;
    \item[{\normalfont (A3)}] $\int \|q(\mathrm{g})\|^2 \mathrm{d}\mathbb{P}_f(\mathrm{g} | \bm{\delta}_{n}(\mathsf{f})) < \infty$.
\end{enumerate}
Then any solution to
\begin{align}
    \argmin_{\mathrm{f} \in \mathcal{F}} r(\mathrm{f}), \qquad r(\mathrm{f}) := \int L(\mathrm{f},\mathrm{g}) \; \mathrm{d}\mathbb{P}_f(\mathrm{g} | \bm{\delta}_{n}(\mathsf{f}) ) \label{eq: Bayes act}
\end{align}
satisfies
\begin{align*}
    q(\mathrm{f}) = \int q(\mathrm{g}) \; \mathrm{d}\mathbb{P}_f(\mathrm{g} | \bm{\delta}_{n}(\mathsf{f}) ) .
\end{align*}
\end{proposition}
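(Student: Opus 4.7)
The plan is to exploit the first-order necessary condition $\mathrm{D}r(\mathrm{f}^*) = 0$ at any minimiser $\mathrm{f}^* \in \mathcal{F}$, combined with the Fr\'{e}chet calculus set up in \Cref{subsec: Frechet}. I would first observe that under (A3) the functional $r$ is finite-valued on $\mathcal{F}$: expanding the square gives $\|q(\mathrm{f}) - q(\mathrm{g})\|^2 \le 2\|q(\mathrm{f})\|^2 + 2\|q(\mathrm{g})\|^2$, and the second term is integrable by hypothesis while the first is a constant with respect to $\mathrm{g}$. By (A1) and the chain-rule computation in \eqref{eq: Fre chain}, the integrand is Fr\'{e}chet differentiable in $\mathrm{f}$ with derivative
\[
\mathrm{D}_{\mathrm{f}}\bigl[\|q(\mathrm{f}) - q(\mathrm{g})\|^2\bigr](\mathrm{h}) \;=\; 2\bigl\langle q(\mathrm{f}) - q(\mathrm{g}),\; \mathrm{D}q(\mathrm{f})(\mathrm{h}) \bigr\rangle.
\]

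Next I would differentiate under the integral sign by invoking \Cref{prop: change order} with the conditional distribution $\mathbb{P}_f(\cdot \mid \bm{\delta}_n(\mathsf{f}))$ on $\mathcal{F}$ playing the role of the probability measure. Condition~(1) is the display above; conditions~(2) and~(3) follow from (A3) after expanding the square and applying Cauchy--Schwarz via $|\langle q(\mathrm{f}) - q(\mathrm{g}), \mathrm{D}q(\mathrm{f})(\mathrm{h})\rangle| \le \|q(\mathrm{f}) - q(\mathrm{g})\| \cdot \|\mathrm{D}q(\mathrm{f})\| \cdot \|\mathrm{h}\|$; and the dominated-integrability condition~(4) reduces to $\int \|q(\mathrm{f}) - q(\mathrm{g})\| \, \mathrm{d}\mathbb{P}_f(\mathrm{g} \mid \bm{\delta}_n(\mathsf{f})) < \infty$, which is a one-step consequence of (A3) and Jensen's inequality. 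Since $\mathrm{D}q(\mathrm{f})$ does not depend on the integration variable $\mathrm{g}$, linearity of the operator allows it to be pulled outside the integral, yielding
\[
\mathrm{D}r(\mathrm{f})(\mathrm{h}) \;=\; 2\Bigl\langle q(\mathrm{f}) - \textstyle\int q(\mathrm{g}) \, \mathrm{d}\mathbb{P}_f(\mathrm{g} \mid \bm{\delta}_n(\mathsf{f})),\; \mathrm{D}q(\mathrm{f})(\mathrm{h}) \Bigr\rangle.
\]

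The endgame is to set $\mathrm{D}r(\mathrm{f}^*)(\mathrm{h}) = 0$ for every $\mathrm{h} \in \mathcal{F}$ at a minimiser $\mathrm{f}^*$ and write $v := q(\mathrm{f}^*) - \int q(\mathrm{g})\, \mathrm{d}\mathbb{P}_f(\mathrm{g} \mid \bm{\delta}_n(\mathsf{f})) \in \mathbb{R}^d$, giving $\langle v, \mathrm{D}q(\mathrm{f}^*)(\mathrm{h})\rangle = 0$ for all $\mathrm{h} \in \mathcal{F}$. This is the only place where (A2) is used: the full-rank hypothesis on $\mathrm{D}q(\mathrm{f}^*)$ forces the vector $v$ annihilating the image to vanish, delivering the claimed identity $q(\mathrm{f}^*) = \int q(\mathrm{g}) \, \mathrm{d}\mathbb{P}_f(\mathrm{g} \mid \bm{\delta}_n(\mathsf{f}))$.

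The step I expect to be the main obstacle is the rigorous justification of interchanging the Fr\'{e}chet derivative with the integral, which is where the completeness of $\mathcal{F}$ as a Banach space and the square-integrability hypothesis (A3) are both used in concert to verify the hypotheses of \Cref{prop: change order}. Once that is in place, the remainder of the argument is the infinite-dimensional analogue of the classical variational derivation of the posterior mean as the Bayes act for squared error loss, and the algebra is essentially mechanical.
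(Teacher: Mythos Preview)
Your proposal is correct and follows essentially the same route as the paper: apply \Cref{prop: change order} to justify differentiating $r$ under the integral, use the chain-rule identity \eqref{eq: Fre chain} to express $\mathrm{D}r(\mathrm{f})(\cdot) = 2\langle q(\mathrm{f}) - \int q(\mathrm{g})\,\mathrm{d}\mathbb{P}_f(\mathrm{g}\mid\bm{\delta}_n(\mathsf{f})),\, \mathrm{D}q(\mathrm{f})(\cdot)\rangle$, and then invoke the full-rank assumption (A2) to conclude that the bracketed vector vanishes at any minimiser. If anything, you are slightly more explicit than the paper in checking the integrability hypotheses (2)--(4) of \Cref{prop: change order}.
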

\begin{proof}
From an application of \Cref{prop: change order} with $\ell(\mathrm{f},\omega) = L(\mathrm{f},g(\omega))$, where $g : \Omega \rightarrow \mathcal{F}$ is a random variable with distribution $\mathbb{P}_f(\cdot | \bm{\delta}_n(\mathsf{f}))$, we deduce that our assumptions on $L$ and $q$ (A1) are sufficient for the Fr\'{e}chet derivative of $r$ to exist.
Thus, a minimiser $\mathrm{f}$ of \eqref{eq: Bayes act} satisfies $\mathrm{D}r(\mathrm{f}) = 0$.
To evaluate $\mathrm{D}r$ we exploit the integrability assumption (A3) on $q$ to differentiate under the integral, which is also justified from \Cref{prop: change order}:
\begin{align*}
    \mathrm{D}r(\mathrm{f})(\cdot) = \int \mathrm{D}L(\mathrm{f},\mathrm{g}) \; \mathbb{P}_f(\mathrm{g} | \bm{\delta}_n(\mathsf{f})) .
\end{align*}
Next we apply the chain rule for Fr\'{e}chet derivatives in the form of \eqref{eq: Fre chain}, yielding
\begin{align*}
    \mathrm{D}r(\mathrm{f})(\cdot) & = \int 2 \langle q(\mathrm{f}) - q(\mathrm{g}) , \mathrm{D}q(\mathrm{f})(\cdot) \rangle \; \mathrm{d}\mathbb{P}_f(\mathrm{g} | \bm{\delta}_{n}(\mathsf{f}) ) \\
    & = 2 \left\langle \underbrace{ q(\mathrm{f}) - \int q(\mathrm{g}) \mathrm{d}\mathbb{P}_f(\mathrm{g} | \bm{\delta}_{n}(\mathsf{f}) ) }_{(*)} , \mathrm{D}q(\mathrm{f})(\cdot) \right\rangle .
\end{align*}
Since $\mathrm{D}q(\mathrm{f})$ was assumed to have full rank (A2), if $\mathrm{D}r(\mathrm{f}) = 0$ then $(*) = 0$, whence the claimed result.
\end{proof}

Now we are able to prove the advertised result:

\begin{proposition}
In the setting of \Cref{prop: Bayes act}, and under assumptions {\normalfont (A1-3)}, we have
\begin{align*}
    \min_{\mathrm{f} \in \mathcal{F}} r(\mathrm{f}) = \frac{1}{2} \iint  L(\mathrm{g},\mathrm{g}') \; \mathrm{d}\mathbb{P}_f(\mathrm{g} | \bm{\delta}_{n}(\mathsf{f})   ) \; \mathrm{d}\mathbb{P}_f(\mathrm{g}' | \bm{\delta}_{n}(\mathsf{f})  ) .
\end{align*}
\end{proposition}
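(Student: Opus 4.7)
The plan is to leverage Proposition~\ref{prop: Bayes act} to identify the value of $r$ at a minimiser, and then show by an elementary ``variance-via-pairs'' identity that this value equals the asserted double integral. Write $\mathbb{Q} := \mathbb{P}_f(\cdot \mid \bm{\delta}_n(\mathsf{f}))$ for the conditional measure, and set $\bar{q} := \int q(\mathrm{g})\, d\mathbb{Q}(\mathrm{g})$, which is well-defined as a Bochner integral in $\mathbb{R}^d$ since (A3) together with Jensen's inequality gives $\int \|q(\mathrm{g})\|\, d\mathbb{Q}(\mathrm{g}) < \infty$.

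First I would invoke Proposition~\ref{prop: Bayes act}: any minimiser $\mathrm{f}^\star$ of $r$ satisfies $q(\mathrm{f}^\star) = \bar{q}$. Substituting into the definition of $r$ yields
\[
\min_{\mathrm{f} \in \mathcal{F}} r(\mathrm{f}) = r(\mathrm{f}^\star) = \int \|\bar{q} - q(\mathrm{g})\|^2 \, d\mathbb{Q}(\mathrm{g}).
\]
(Existence of a minimiser is not directly needed: one can equally argue that the infimum equals this value by showing $r(\mathrm{f}) \geq \int \|\bar q - q(\mathrm{g})\|^2\,d\mathbb{Q}$ for every $\mathrm{f}$, expanding $\|q(\mathrm{f}) - q(\mathrm{g})\|^2 = \|q(\mathrm{f}) - \bar q\|^2 + 2\langle q(\mathrm{f}) - \bar q, \bar q - q(\mathrm{g})\rangle + \|\bar q - q(\mathrm{g})\|^2$ and integrating, so the cross term vanishes.)

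Next I would manipulate the target double integral. Let $g, g'$ be independent $\mathcal{F}$-valued random elements with distribution $\mathbb{Q}$ on some product probability space; by (A3), both $q(g)$ and $q(g')$ are square-integrable $\mathbb{R}^d$-valued random variables with common mean $\bar{q}$, so Fubini's theorem applies to the nonnegative integrand $\|q(\mathrm{g}) - q(\mathrm{g}')\|^2$. Expanding,
\begin{align*}
\iint \|q(\mathrm{g}) - q(\mathrm{g}')\|^2 \, d\mathbb{Q}(\mathrm{g})\, d\mathbb{Q}(\mathrm{g}')
&= 2\int \|q(\mathrm{g})\|^2 \, d\mathbb{Q}(\mathrm{g}) - 2\,\langle \bar q, \bar q\rangle \\
&= 2\int \|q(\mathrm{g}) - \bar q\|^2 \, d\mathbb{Q}(\mathrm{g}),
\end{align*}
where the last equality is the standard identity $\mathbb{E}\|X\|^2 - \|\mathbb{E}X\|^2 = \mathbb{E}\|X - \mathbb{E}X\|^2$ applied with $X = q(g)$. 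Dividing by $2$ and comparing with the previous display gives the claim.

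The only point requiring any care is the integrability needed to justify Fubini and to make $\bar q$ well-defined; this is exactly what (A3) delivers. Assumptions (A1) and (A2) are needed only to apply Proposition~\ref{prop: Bayes act}, not in the final calculation itself. There is no substantive obstacle beyond bookkeeping: the proof is essentially the finite-dimensional identity $\mathrm{Var}(X) = \tfrac{1}{2}\mathbb{E}\|X - X'\|^2$ for iid copies $X, X'$, transferred to the infinite-dimensional setting via the pushforward of $\mathbb{Q}$ under $q$.
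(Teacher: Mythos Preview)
Your proposal is correct and follows essentially the same route as the paper: both invoke \Cref{prop: Bayes act} to identify $q(\mathrm{f}^\star)=\bar q$ and then reduce the claim to the elementary ``variance-via-pairs'' identity $\mathbb{E}\|X-X'\|^2 = 2\,\mathbb{E}\|X-\mathbb{E}X\|^2$ applied to $X=q(g)$. The only cosmetic difference is that the paper expands $\|q(\mathrm{g})-q(\mathrm{g}')\|^2$ by inserting $\pm q(\mathrm{f}^\star)$ and integrating the cross term to zero, whereas you expand directly and invoke the second-moment identity; the content is identical.
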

\begin{proof}
Let $\mathrm{f} \in \mathcal{F}$ solve \eqref{eq: Bayes act}.
Then consider the algebraic identity
\begin{align*}
    q(\mathrm{g}) - q(\mathrm{g}') = \left\{ q(\mathrm{g}) - q(\mathrm{f}) \right\} - \left\{ q(\mathrm{g}') - q(\mathrm{f}) \right\} .
\end{align*}
Using this identity, the loss function can be expressed as
\begin{align*}
    L(\mathrm{g},\mathrm{g}') & = \|q(\mathrm{g}) - q(\mathrm{g}')\|^2 \\
    & = \| q(\mathrm{g}) - q(\mathrm{f}) \|^2 - 2 \langle q(\mathrm{g}) - q(\mathrm{f}) , q(\mathrm{g}') - q(\mathrm{f})  \rangle + \| q(\mathrm{g}') - q(\mathrm{f}) \|^2 .
\end{align*}
From linearity of the inner product we have that
\begin{align*}
    & \iint \langle q(\mathrm{g}) - q(\mathrm{f}) , q(\mathrm{g}') - q(\mathrm{f})  \rangle \; \mathrm{d}\mathbb{P}_f(\mathrm{g} | \bm{\delta}_{n}(\mathsf{f})  ) \; \mathrm{d}\mathbb{P}_f(\mathrm{g}' | \bm{\delta}_{n}(\mathsf{f})  )  \\
    & = \left\langle \underbrace{ \int q(\mathrm{g}) - q(\mathrm{f}) \; \mathrm{d}\mathbb{P}_f(\mathrm{g} | \bm{\delta}_{n}(\mathsf{f})  ) }_{=0} , \underbrace{ \int q(\mathrm{g}') - q(\mathrm{f}) \; \mathrm{d}\mathbb{P}_f(\mathrm{g}' | \bm{\delta}_{n}(\mathsf{f})  ) }_{=0} \right\rangle = 0 ,
\end{align*}
where we have used the integrability assumption (A3) on $q$ to bring the integrals into the inner product, and we have used \Cref{prop: Bayes act} to conclude that each argument is equal to 0.
Finally, from the fact that $g$ and $g'$ are identically distributed, we have
\begin{align*}
    & \frac{1}{2} \iint  L(\mathrm{g},\mathrm{g}') \; \mathrm{d}\mathbb{P}_f(\mathrm{g} | \bm{\delta}_{n}(\mathsf{f})   ) \; \mathrm{d}\mathbb{P}_f(\mathrm{g}' | \bm{\delta}_{n}(\mathsf{f})  ) \\
    & = \frac{1}{2} \iint \|q(\mathrm{g}) - q(\mathrm{f})\|^2 + \|q(\mathrm{g}') - q(\mathrm{f})\|^2 \; \mathrm{d}\mathbb{P}_f(\mathrm{g} | \bm{\delta}_{n}(\mathsf{f})   ) \; \mathrm{d}\mathbb{P}_f(\mathrm{g}' | \bm{\delta}_{n}(\mathsf{f})  ) \\
    & = \frac{1}{2} \times 2 \times \int \|q(\mathrm{g}) - q(\mathrm{f})\|^2 \; \mathrm{d}\mathbb{P}_f(\mathrm{g} | \bm{\delta}_{n}(\mathsf{f})   ) = \min_{\mathrm{f} \in \mathcal{F}} r(\mathrm{f}) ,
\end{align*}
which completes the argument.
\end{proof}

\subsection{Verifying the Assumptions} \label{subsec: discuss conds}

The main assumption in \Cref{prop: Bayes act} is (A2); the requirement that $\mathrm{D}q(\mathrm{f})$ has full rank for all $\mathrm{f} \in \mathcal{F}$.
As we explain below through a worked example, (A2) is non-trivial but may often be satisfied with only minor modification to the \ac{SED} task in hand.

As a worked example, suppose $\mathcal{F}$ is a Hilbert space containing smooth, real-valued functions defined on a compact set $\mathcal{X} \subset \mathbb{R}^d$.
Suppose that we are interested in the quantity of interest
\begin{align}
q(\mathrm{f}) = \int_{\mathcal{X}} \mathrm{f}(x) \mathrm{d}x . \label{eq: q1}
\end{align}
Then (A2) is \emph{not} satisfied in general, because $q(\mathrm{f} + \mathrm{g}) =  q(\mathrm{f})$ for all $\mathrm{g}$ in the linear subspace $\mathcal{G} = \{ \mathrm{g} \in \mathcal{F} : \int_{\mathcal{X}} \mathrm{g}(x) \mathrm{d}x = 0 \}$.
It follows that $\mathrm{D}q(\mathrm{f})(\mathrm{g}) = 0$ for all $\mathrm{g} \in \mathcal{G}$, so that $\mathrm{D}q(\mathrm{f})$ does not have full rank whenever $\mathcal{G}$ is non-trivial.
However, (A2) \emph{is} satisfied if we restrict attention to the normed vector space $\mathcal{F}_{\textsf{c}}$ spanned by the elements of $\mathcal{F} \setminus \mathcal{G}$, since then $\mathrm{D}q(\mathrm{f})(\mathrm{g}) = \int_{\mathcal{X}} \mathrm{g}(x) \mathrm{d}x$ and thus $\mathrm{D}q(\mathrm{f})(\mathrm{g}) = 0$ with $g \in \mathcal{F}_{\textsf{c}}$ implies $g = 0$.
This illustrates that, with a small amount of technical care, the assumptions of \Cref{prop: Bayes act} can often be satisfied.



\section{Properties of Gaussian Processes} \label{app sec: properties of gaussian processes}

In this section, we present the formal background of conditioning on continuous linear data for Gaussian processes and detail properties of the Mat\'ern covariance function.

\subsection{Disintegration of Gaussian Measures} \label{subsec: disintegration of gaussian}

Let $\mathcal{X}$ be a compact subset of $\mathbb{R}^d$ for some $d \in \mathbb{N}$ and let $C^r(\mathcal{X})$ denote the vector space of $r$-times continuously differentiable real-valued functions on $\mathcal{X}$ equipped with the norm 
$$
\|\mathrm{f}\|_{C^r(\mathcal{X})} = \max_{|\alpha| \leq r} \|\mathrm{f}^{(\alpha)}\|_\infty ,
$$
where the maximum ranges over multi-indices $\alpha \in \mathbb{N}_0^d$ with $|\alpha| = \alpha_1 + \dots + \alpha_d \leq r$ and $\mathrm{f}^{(\alpha)}(x) := \partial_{x_1}^{\alpha_1} \dots \partial_{x_d}^{\alpha_d} \mathrm{f}(x)$.
In what follows we consider disintegration in the case where $\mathcal{F} = C^r(\mathcal{X})$, equipped with the Borel $\sigma$-algebra, and $\mathcal{Y} = \mathbb{R}$.
For an operator $\delta$ and a bivariate function $k(\cdot,\cdot)$, denote $\delta k(\cdot,\cdot)$ to be the action of $\delta$ on the first argument of $k$, and denote $\bar{\delta} k(\cdot,\cdot)$ to be the action of $\delta$ on the second argument of $k$.

\begin{lemma} \label{lem: Gauss disin}
Let $\mathbb{P}$ be a Gaussian measure on $C^r(\mathcal{X})$ with mean function $m : \mathcal{X} \rightarrow \mathbb{R}$ and covariance function $k : \mathcal{X} \times \mathcal{X} \rightarrow \mathbb{R}$.
Let $\delta : C^r(\mathcal{X}) \rightarrow \mathbb{R}$ be a continuous linear functional.
For each $y \in \mathbb{R}$, define $\mathbb{P}(\cdot | y)$ to be a Gaussian measure with mean and covariance function
\begin{align*}
    m_y(x) & = m(x) + [\bar{\delta} k(x,\cdot)] [\delta \bar{\delta} k(\cdot,\cdot)]^{-1} (y - m(x)) \\
    k_y(x,x') & = k(x,x') - [\bar{\delta} k(x,\cdot)] [\delta \bar{\delta} k(\cdot,\cdot)]^{-1} [\delta k(\cdot,x')] .
\end{align*}
Then $\{\mathbb{P}(\cdot | y)\}_{y \in \mathbb{R}}$ is a $\delta$-disintegration of $\mathbb{P}$.
\end{lemma}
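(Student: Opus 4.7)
My plan is to verify the three axioms of \Cref{def: disintegrate} directly, exploiting throughout the fact that a Gaussian measure on $C^r(\mathcal{X})$ is uniquely determined by its characteristic functional $\hat{\mathbb{P}}(\ell) = \exp(i\ell(m) - \frac{1}{2}\ell\bar\ell k)$, evaluated over continuous linear functionals $\ell : C^r(\mathcal{X}) \rightarrow \mathbb{R}$. Throughout, I would also need to confirm that the defining formulae really do produce a \emph{bona fide} Gaussian measure $\mathbb{P}(\cdot|y)$ on $C^r(\mathcal{X})$: in particular that $k_y$ is a valid (positive semi-definite) covariance kernel, which follows from identifying $k_y$ with the covariance of the residual $f - \mathbb{E}[f | \delta(f)]$ under $\mathbb{P}$.

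For axiom (1), I would compute the pushforward $\delta_\# \mathbb{P}(\cdot|y)$, which is a one-dimensional Gaussian with mean $\delta(m_y)$ and variance $\delta\bar\delta k_y$. Applying $\delta$ to the given expressions for $m_y$ and $k_y$ and simplifying the scalars $\delta\bar\delta k \cdot [\delta\bar\delta k]^{-1}$, one finds the mean collapses to $y$ and the variance to $0$, so $\delta(f) = y$ holds $\mathbb{P}(\cdot|y)$-almost surely, which is exactly $\mathbb{P}(\delta^{-1}(y) | y) = 1$. For axiom (2), I would note that the map $y \mapsto \mathbb{P}(\cdot|y)$ is weakly continuous because $m_y$ depends affinely on $y$ while $k_y$ is constant in $y$; this gives measurability of $y \mapsto \int g\,\mathrm{d}\mathbb{P}(\cdot|y)$ for bounded continuous $g$, then a monotone-class argument extends to all non-negative measurable $g$.

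The main work lies in axiom (3). My plan is to define the mixture measure $\mathbb{Q}(B) := \int \mathbb{P}(B|y)\,\mathrm{d}\delta_\# \mathbb{P}(y)$ on $C^r(\mathcal{X})$ and prove $\mathbb{Q} = \mathbb{P}$ by matching characteristic functionals; axiom (3) then follows by the monotone-class extension. For a continuous linear functional $\ell$, the inner integral is
\begin{align*}
\int e^{i\ell(\mathrm{f})}\,\mathrm{d}\mathbb{P}(\mathrm{f}|y) = \exp\!\Big( i\ell(m_y) - \tfrac12 \ell\bar\ell k_y \Big),
\end{align*}
which is affine in $y$ in the exponent's imaginary part and independent of $y$ in its real part. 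Integrating this against the one-dimensional Gaussian $\delta_\#\mathbb{P} = N(\delta(m), \delta\bar\delta k)$ is a standard scalar Gaussian calculation; after the two symmetric $[\ell\bar\delta k][\delta\bar\delta k]^{-1}[\bar\ell \delta k]$ contributions cancel (one gained from completing the square in $y$, one subtracted inside $k_y$), the expression reduces to $\exp(i\ell(m) - \tfrac12 \ell\bar\ell k) = \hat{\mathbb{P}}(\ell)$. Since $\ell$ was arbitrary and the family of such functionals is determining for Gaussian measures on $C^r(\mathcal{X})$, this forces $\mathbb{Q} = \mathbb{P}$.

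The principal obstacle I foresee is the infinite-dimensional bookkeeping: ensuring Fubini is applicable when swapping the two integrals (which is fine here, since characteristic integrands are bounded by $1$ and both measures are probability measures), checking that $m_y$ indeed lies in $C^r(\mathcal{X})$ (which requires $\bar\delta k(\cdot, \cdot)$ inheriting $r$-fold differentiability in its free argument from $k$), and avoiding circularity when identifying $k_y$ as a valid covariance. For the latter, rather than verifying positive semi-definiteness from scratch, I would cite the general disintegration theory for Gaussian measures on separable Banach spaces (e.g.\ Bogachev's monograph, or Theorem 3.11 of \cite{sullivan2015uncertainty}) to obtain existence of a Gaussian disintegration with the correct mean and covariance, then verify that these agree with the formulae stated in the lemma.
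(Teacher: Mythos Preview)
Your proposal is correct and follows essentially the same approach as the paper: the paper's proof consists of the single sentence ``The proof is by direct verification of properties (1--3) in \Cref{def: disintegrate}'' together with a citation to p.188 of \cite{ritter2007average}, whereas you actually carry out that direct verification (concentration on $\delta^{-1}(y)$ via the collapse of mean and variance, measurability via weak continuity plus monotone-class, and the mixture identity via characteristic functionals). So there is no methodological difference, only a difference in level of detail.
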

\begin{proof}
The proof is by direct verification of properties (1-3) in \Cref{def: disintegrate}.
See e.g. p.188 of \cite{ritter2007average}.
\end{proof}

The fact that the elements $\mathbb{P}(\cdot|y)$ of the disintegration are again Gaussian enables the repeated application of \Cref{lem: Gauss disin}, for example to condition on $n \geq 1$ continuous linear functionals $\bm{\delta}_n = (\delta_1, \dots, \delta_n)^\top$, as exploited in the main text.
Constructed in this way, it can be verified that the elements $\mathbb{P}(\cdot | \bm{y}_n)$ of the resulting disintegration, with $\bm{y}_n \in \mathcal{Y}^n$, are invariant to the order in which the disintegrations are performed.

\subsection{Sufficient Conditions for Disintegration of Mat\'{e}rn Processes} \label{app subsec: sufficient matern}

To exploit \Cref{lem: Gauss disin} in practice it is sufficient to verify that samples from the Gaussian measure are almost surely contained in $C^r(\mathcal{X})$.
Such analysis is technical but specific results are available for derivative data in the context of the tensor product Mat\'{e}rn covariance model that we primarily use in this work.
Indeed, let $\mathbb{P}$ be a Gaussian measure with mean function $m$ and covariance function $k$, such that $m \in C^r(\mathcal{X})$ and 
\begin{align}
    k(x,x') & := \sigma^2 \prod_{i=1}^d k_{\nu_i}(x_i - x_i') \label{eq: tensor matern} , \qquad 
    k_{\nu}(z)  := \frac{2^{1-\nu}}{\Gamma(\nu)} \left( \sqrt{2 \nu} \frac{|z|}{\rho} \right)^\nu K_\nu\left( \sqrt{2 \nu} \frac{|z|}{\rho} \right) , 
\end{align}
where $K_\nu$ denotes the modified Bessel function of the second kind and $\nu_i := r + \frac{1}{2}$.
Then Theorem 2 of \cite{wang2021bayesian} establishes that samples are almost surely contained in $C^r(\mathcal{X})$.
Moreover, maps of the form $\delta(f) = f^{(\alpha)}(x)$, $|\alpha| \leq r$ are continuous linear functionals from $C^r(\mathcal{X})$ to $\mathbb{R}$, since $|\delta(f)| \leq \|f\|_{C^r(\mathcal{X})}$ for all $f \in C^r(\mathcal{X})$.
Thus, in this case \Cref{lem: Gauss disin} can be used to condition the tensor product Mat\'{e}rn process in \eqref{eq: tensor matern} on the derivative data $f^{(\alpha)}(x)$, safe in the knowledge that the conditional process will be well-defined.

\section{Spectral Approximation}
\label{app: eigenfunctions}

This section presents an informal derivation of the spectral \ac{GP} approximation of \cite{solin2019hilbert}. 
The following utilises properties of the Fourier transform, which are first briefly recalled.

\subsection{Properties of the Fourier Transform} 

In the following we use $F$ to denote the Fourier transform operator and use the notation $\hat{f} \coloneqq F(f)$ to denote the Fourier transform of $f$. In the following we use the convention of using  the angular frequency domain. Therefore, for square-integrable $f:\mathbb{R}^d\rightarrow\mathbb{R}$, we have
\begin{equation*}
    F(f) = \frac{1}{(2\pi)^d} \int f(x) \exp(i\langle \omega, x\rangle)\,\mathrm{d}x.
\end{equation*}
Recall that, when an operator $T$ satisfies $F(Tf)(\omega) = m(\omega) \hat{f}(\omega)$, the operator $T$ is called a \emph{multiplier operator} and the corresponding $m$ is called the \emph{multiplier} of $T$. As a trivial example, the identity operator $Tf = f$ is a multiplication operator, with associated multiplier $1$. A more elaborate example, that is used in the subsequent section, is the Laplace operator $\Delta \coloneqq \frac{\partial^2}{\partial x_1^2} + \ldots + \frac{\partial^2}{\partial x_d^2}$, acting on twice differentiable functions $f:\mathbb{R}^d\rightarrow\mathbb{R}$. It can be shown that
\begin{equation} \label{eq: laplacian multiplier}
    F(\Delta f) = - \|\omega\|^2 F(f).
\end{equation}
Therefore, the Laplace operator is a multiplier operator with corresponding multiplier $-\|\omega\|^2$. Similarly, compositions of Laplace operators $\Delta^n \coloneqq \underbrace{\Delta \circ \ldots \circ \Delta}_{n \text{ times}}$, acting on sufficiently smooth functions $f$, is also a multiplier operator with multiplier $(-\|\omega\|^2)^n$. This can be seen by induction on the previous formula,
\begin{equation*}
    F(\Delta^n f) = - \|\omega\|^2 F(\Delta^{n-1}f)= \ldots = (- \|\omega\|^2)^n F(f).
\end{equation*}
By the convolution theorem, every multiplier operator $T$ with multiplier $m_T$, has an associated convolution kernel $k_T \coloneqq F^{-1}(m_T)$ that satisfies the following
\begin{align*}
F(Tf)(\omega) &= m_T(\omega) \hat{f}(\omega) \\
    Tf &= F^{-1}(m_T \hat{f}) 
    = f \star F^{-1}(m_T) 
    = f \star k_T,
\end{align*}
where $\star$ denotes convolution. Thus a multiplier operator is, in this sense, equivalent to a convolution operation.

We now state two important results that define the intimate connection between covariance functions and the Fourier transform. The first result is known as \emph{Bochner's theorem} \citep{rudin1990fourier}.

\begin{theorem}[Bochner's theorem] \label{theorem: bochner's theorem}
A \emph{stationary} covariance function, i.e. a covariance function of the form $k(x,y) = k(x-y)$, $k:\mathbb{R}^d\rightarrow\mathbb{R}$, can be written as the inverse Fourier transform of a finite positive measure $\mu$ such that $k(0) = \mu(\mathbb{R}^d)$. That is
\begin{equation*}
    k(x) = \frac{1}{(2\pi)^d}\int \exp\left(i \langle \omega, x\rangle \right) \,\mathrm{d}\mu(\omega).
\end{equation*}
\end{theorem}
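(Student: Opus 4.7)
The plan is to recognize Bochner's theorem as a classical result in harmonic analysis and outline the standard functional-analytic argument that proves both directions. The easier direction first: if $k$ is already the inverse Fourier transform of a finite positive measure $\mu$, then for any $c_1,\dots,c_n \in \mathbb{C}$ and $x_1,\dots,x_n \in \mathbb{R}^d$,
\[
\sum_{j,l} c_j \overline{c_l}\, k(x_j - x_l) = \frac{1}{(2\pi)^d} \int \Big| \sum_j c_j e^{i\langle \omega, x_j\rangle} \Big|^2 \mathrm{d}\mu(\omega) \geq 0,
\]
so $k$ is continuous and positive-definite, and evaluating the representation at $x=0$ yields the mass identity $k(0) = \mu(\mathbb{R}^d)/(2\pi)^d$ (up to the normalisation convention fixed at the start of the section). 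The content of the theorem is the converse: every continuous positive-definite $k$ arises this way.

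For the converse, I would build the measure $\mu$ via the Riesz--Markov--Kakutani representation theorem. Concretely, for test functions $\phi \in \mathcal{S}(\mathbb{R}^d)$ one defines the sesquilinear form $B(\phi,\psi) = \iint k(x-y)\phi(x)\overline{\psi(y)}\,\mathrm{d}x\,\mathrm{d}y$, which is non-negative definite precisely because the Riemann sums $\sum_{i,j} c_i \overline{c_j} k(x_i - x_j) \geq 0$ approximate $B(\phi,\phi)$ when $c_i = \phi(x_i)\Delta x$. Translating this positivity statement through the Fourier transform (writing $B(\phi,\phi) = c_d \int k(x)(\phi\star \check\phi)(x)\,\mathrm{d}x$ with $\check\phi(x) = \overline{\phi(-x)}$) shows that the linear functional $\Lambda : \psi \mapsto c_d\int k(x)\psi(x)\,\mathrm{d}x$ is non-negative on the cone of functions expressible as $|\hat\phi|^2$. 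Density of this cone in $C_0(\mathbb{R}^d)_+$ and Riesz--Markov then produces a non-negative Radon measure $\mu$ with $\Lambda(\psi) = \int \hat\psi\,\mathrm{d}\mu$, which upon Fourier inversion is exactly the desired spectral representation.

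The main obstacle is the passage from pointwise positive-definiteness to the continuous positivity of $\Lambda$, combined with the non-integrability of $k$ on $\mathbb{R}^d$: one cannot simply take a Fourier transform of $k$ directly. The standard fix is mollification. Convolve $k$ with a Gaussian $g_\varepsilon$ of width $\varepsilon$ to obtain $k_\varepsilon$, which is continuous, bounded, positive-definite, and in $L^1 \cap L^2$, so that $\hat k_\varepsilon$ exists as an ordinary function; a direct computation using $\hat{g}_\varepsilon \geq 0$ shows $\hat k_\varepsilon \geq 0$. The associated measures $\mathrm{d}\mu_\varepsilon = \hat k_\varepsilon\,\mathrm{d}\omega$ have total mass uniformly bounded by $k(0)$, so Helly/Prokhorov gives a weak-$\ast$ subsequential limit $\mu$. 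Continuity of $k$ at $0$ and standard Fourier-inversion estimates then upgrade weak convergence into the pointwise identity $k(x) = (2\pi)^{-d}\int e^{i\langle\omega,x\rangle}\mathrm{d}\mu(\omega)$, and setting $x=0$ yields $k(0) = \mu(\mathbb{R}^d)$ as stated. Given the maturity of this result, the pragmatic choice in the paper is to invoke \cite{rudin1990fourier} rather than reproduce the argument in full.
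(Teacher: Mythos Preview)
Your final sentence is exactly right: the paper does not prove Bochner's theorem at all, but merely states it and attributes it to \cite{rudin1990fourier}. So in that sense your ``approach'' and the paper's coincide --- defer to the standard reference --- and the sketch you provide is a correct outline of the classical argument, going well beyond what the paper itself supplies.

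One small observation worth flagging: you correctly spotted the normalisation wrinkle. With the inverse Fourier convention stated in the paper, setting $x=0$ in $k(x) = (2\pi)^{-d}\int e^{i\langle\omega,x\rangle}\,\mathrm{d}\mu(\omega)$ gives $k(0) = (2\pi)^{-d}\mu(\mathbb{R}^d)$, not $k(0) = \mu(\mathbb{R}^d)$ as the theorem statement asserts; this is a minor inconsistency in the paper's own formulation rather than an error on your part.
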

The measure $\mu$ is called the \emph{spectral measure} of $k$ and the density of $\mu$, if it exists, is called the \emph{spectral density} $s(\omega)$ of $k$. In the case where the spectral density $s(\omega)$ of a stationary covariance function $k$ exists, $k$ and $s$ exist as Fourier duals. This result is known as the \emph{Wiener--Khintchine theorem} \citep{Khintchine1934theorem}.

\begin{theorem}[Wiener--Khintchine theorem] \label{theorem: wiener khintchin}
Suppose that the spectral density $s:\mathbb{R}^d\rightarrow\mathbb{R}$ of a stationary covariance function $k:\mathbb{R}^d\rightarrow\mathbb{R}$ exists, then
\begin{align*}
    k(x) &= \frac{1}{(2\pi)^d}\int s(\omega) \exp\left(i \langle \omega, x\rangle \right) \, \mathrm{d}\omega, \qquad
    s(\omega) = \int k(x) \exp\left(-i \langle \omega, x\rangle \right) \, \mathrm{d}s.
\end{align*}
\end{theorem}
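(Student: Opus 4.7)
The plan is to deduce both identities as immediate consequences of Bochner's theorem (\Cref{theorem: bochner's theorem}) combined with Fourier inversion; essentially no additional probabilistic content is required beyond what Bochner already provides.

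First, Bochner's theorem furnishes a finite positive measure $\mu$ on $\mathbb{R}^d$ with
$$k(x) = \frac{1}{(2\pi)^d}\int \exp(i\langle\omega,x\rangle)\,d\mu(\omega).$$
The hypothesis that the spectral density $s$ exists means precisely that $\mu$ is absolutely continuous with respect to Lebesgue measure with Radon--Nikodym derivative $s$, so that $d\mu(\omega) = s(\omega)\,d\omega$. Substituting this into the displayed formula gives the first identity of the theorem immediately.

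For the second identity, note that the first identity realises $k$ as a (suitably normalised) Fourier transform of $s$ in the convention fixed at the start of the section. Inverting this Fourier relation expresses $s$ as the Fourier transform of $k$ with the opposite sign in the exponent, which unpacks to $s(\omega) = \int k(x)\exp(-i\langle\omega,x\rangle)\,dx$. The one regularity step requiring attention is the validity of this inversion. Since $\mu$ is finite we have $s\in L^1(\mathbb{R}^d)$, and by dominated convergence applied to the first identity $k$ is continuous and bounded with $|k(x)|\le k(0) = \mu(\mathbb{R}^d)$. The $L^1$ inversion theorem then applies directly whenever $k\in L^1(\mathbb{R}^d)$ as well, which covers the stationary covariance kernels used later in \texttt{GaussED}; in the non-integrable case the identity must be read in the distributional or $L^1$ almost-everywhere sense.

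The main (and only) obstacle is therefore to arrange this integrability of $k$, and it is not genuinely binding in any of the settings considered in the paper. Once Bochner is taken as given, the remaining argument is essentially a bookkeeping exercise: track the normalisation constant $(2\pi)^{-d}$ and the sign convention through the inversion so that the second identity appears in the stated form.
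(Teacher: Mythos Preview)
The paper does not actually prove this theorem: it is stated as a classical result with a citation to \citet{Khintchine1934theorem} and then used without further justification. Your argument, deriving the first identity from Bochner's theorem by absolute continuity of the spectral measure and the second by Fourier inversion, is the standard textbook route and is correct; your caveat about needing $k \in L^1(\mathbb{R}^d)$ (or a distributional reading) for the inversion step is the only genuine technical point, and you have identified it accurately.
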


In the proceeding section the Wiener--Khintchine theorem and the equivalence between a multiplier operator and an associated convolution operation are both used to establish a correspondence between the covariance operator of a stationary kernel $k$ and its spectral density $s$. This is the foundation upon which the spectral GP approximation of \cite{solin2019hilbert} is established.

\subsection{Spectral Gaussian Processes}

For every covariance function $k$, there exists an associated Hilbert--Schmidt integral operator, termed the \emph{covariance operator},
\begin{equation*}
    \mathcal{K} f = \int k(\cdot, y) f(y)\,\mathrm{d}y.
\end{equation*}
When $k$ is stationary, the resulting covariance operator takes the form of a convolution
\begin{equation*}
    \mathcal{K} f(x) = \int k(x - y)f(y)\,\mathrm{d}y = (f \star k)(x).
\end{equation*}
By the convolution theorem, we can then write the operator in the form $F(\mathcal{K}f) = \hat{k} \hat{f}$ and so $\mathcal{K}$ is a multiplier operator with multiplier $\hat{k}$. By \Cref{theorem: wiener khintchin}, the multiplier of $\mathcal{K}$ is the spectral density $s = \hat{k}$ of $k$. 

Assuming now that the covariance function is isotropic and so satisfies 
\begin{equation*}
    k(x,y) = k(\|x-y\|),
\end{equation*}
the corresponding spectral density $s$ of $k$ can be written as a function of $\|\omega\|$ only and so $s(\omega) = S(\|\omega\|)$, for an appropriate function $S$. As a further manipulation, we can write $s$ as a function of $\|\omega\|^2$ only, $s(\omega) = \psi(\|\omega\|^2)$. Assuming that $\psi$ possesses a Taylor expansion, we can write
\begin{equation*}
    s(\omega) = \psi(\|\omega\|^2) = \sum_{i=0}^\infty \mu_i (\|\omega\|^2)^i,
\end{equation*}
with each $\mu_i \in\mathbb{R}$. Inspired by the multiplier $-\|\omega\|^2$ of the Laplacian in \eqref{eq: laplacian multiplier} and by utilising the above Taylor expansion, we can write the Fourier transform of the covariance operator of an isotropic kernel in the form
\begin{align*}
    F(\mathcal{K} f)(\omega) &= s (\omega) \hat{f}(\omega) 
                     =  \sum_{i=0}^\infty \mu_i (\|\omega\|^2)^i\hat{f}(\omega) 
                     =  \sum_{i=0}^\infty \mu_i F( (-\Delta)^i f).
\end{align*}
By continuity of $F$, taking the inverse Fourier transform of the above yields a polynomial expansion form of the covariance operator
\begin{equation} \label{eq: hilbert schmidt laplacian}
    \mathcal{K}f = \sum_{i=0}^\infty \mu_i (-\Delta)^if.
\end{equation}

The remaining step is to approximate the negative Laplacian operator. To achieve this, we write the convolution kernel $k_{-\Delta}$ of the negative Laplacian as a Mercer expansion. To this end, we consider the following eigenvalue problem of the Laplacian over a compact domain $\mathcal{X} \subseteq \mathbb{R}^d$, with boundary $\partial \mathcal{X}$, with Dirichlet boundary conditions
\begin{alignat}{2} \label{eq: eigenvalue problem}
    -\Delta \phi_i(x) & = \lambda_i \phi_i(x), & \quad & x \in \mathcal{X}, \\
    \phi_i(x) & = 0, & \quad & x \in \partial \mathcal{X}. \label{eq: boundary conditions}
\end{alignat}
Over a suitable domain contained within $L^2(\mathcal{X})$, the negative Laplacian is a positive definite Hermitian operator and so we can provide a Mercer expansion of the convolution kernel $k_{-\Delta}$ of the negative Laplacian, utilising the eigenfunctions $\phi_i$. Similarly, we can provide a Mercer expansion of the convolution kernel of $(-\Delta)^n$, noting that each $\phi_i$ is again an eigenfunction, but now with corresponding eigenvalue $\lambda_i^n$. This can be seen by iteratively applying $-\Delta$ to the eigenvalue problem \eqref{eq: eigenvalue problem}. Therefore, we have
\begin{align*}
    (-\Delta)^n f(x) &= f \star k_{(-\Delta)^n} (x) 
                     =\int k_{(-\Delta)^n}(x-y) f(y) \,\mathrm{dy},
\end{align*}
where 
\begin{equation*}
    k_{(-\Delta)^n}(x-y) = \sum_{j=1}^\infty \lambda_j^n \phi_j(x)\phi_j(y).
\end{equation*}
Plugging the preceding formula into equation \eqref{eq: hilbert schmidt laplacian} yields the following:
\begin{align*}
    \mathcal{K}f(x) &= \sum_{i=0}^\infty \mu_i (-\Delta)^if 
    = \sum_{i=0}^\infty \mu_i \int k_{(-\Delta)^i}(x-y) f(y) \,\mathrm{dy} 
    =  \int \left(\sum_{i=0}^\infty \mu_i k_{(-\Delta)^i}(x-y) \right) f(y) \,\mathrm{dy}.
\end{align*}
Comparing the above form of $\mathcal{K}f(x)$ to its original definition $\mathcal{K}f(x) = \int k(x-y) f(y) \,\mathrm{d}y$ implies that we can approximate $k$ as follows
\begin{align*}
    k(x,y) &\approx \sum_{i=0}^\infty \mu_i k_{(-\Delta)^i}(x-y) 
    = \sum_{i=0}^\infty \mu_i \sum_{j=1}^\infty \lambda_j^i \phi_j(x)\phi_j(y) 
    = \sum_{j=1}^\infty \left(\sum_{i=0}^\infty \mu_i \lambda_j^i \right) \phi_j(x)\phi_j(y) 
    = \sum_{j=1}^\infty s(\sqrt{\lambda_j}) \phi_j(x)\phi_j(y),
\end{align*}
where, in the final step, we utilised our Taylor expansion of the spectral density $s$ of $k$ and set $\|\omega\|^2 = \lambda_j$ for each $j \in \mathbb{N}$. Refer to the original work \cite{solin2019hilbert} for convergence analyses of the given approximation.

Therefore, the resulting Gaussian model assumes the following truncated basis expansion
\begin{equation*} \label{eq: truncated expansion}
    f(\cdot)  = \sum_{i=1}^m c_i \phi_i(\cdot),
\end{equation*}
where $c_i \sim \mathcal{N}(0, s(\sqrt{\lambda_i}))$ and the $\phi_i$ and $\lambda_i$ are the corresponding eigenfunctions and eigenvalues of the Laplacian over a compact domain $\mathcal{X}$ with Dirichlet boundary conditions $\phi_i(x) = 0$ on $\partial \mathcal{X}$.

When the domain is the unit hypercube, $\mathcal{X} = [0,1]^d$, the resulting eigenfunctions and eigenvalues can be explicitly computed as 
\begin{equation} \label{eq: dirichlet basis}
    \phi_{j}(x) = 2^{d/2}\prod_{k=1}^d \sin\left(\pi j_k x_k\right), \qquad \lambda_{j} = \sum_{k=1}^d \left(\pi j_k\right)^2,
\end{equation}
where $j = (j_1,\ldots,j_d) \in \mathbb{Z}_m^d$. 
Taking $m$ sinusoidal functions in each dimension yields $m^d$ eigenfunctions in total. For computational purposes, in \texttt{GaussED} the domain $\mathcal{X}$ of the Gaussian model is taken as a $d$-dimensional Cartesian product of intervals $[a_1,b_1]\times\ldots\times[a_d,b_d]$. The required eigenfunctions can be obtained by a simple rescaling of the previous formula.

\section{Computational Details of \texttt{GaussED}} \label{sec: gaussed details}

In this section we provide details of certain aspects of the computational approaches of \texttt{GaussED}. In \Cref{subsec: conditioning}, we derive the relevant conditional distributions of the spectral Gaussian process model detailed in \Cref{subsec: spectral}, under general linear information. In 

\subsection{Conditioning} \label{subsec: conditioning}

In this section, we both derive and discuss \texttt{GaussED}'s approach to conditioning and sampling from the posterior. For completeness, we present the derivation of the conditional distributions of the Gaussian process model detailed in \Cref{app: eigenfunctions}. For the sake of generality we consider a general truncated basis model, which takes the form of
\begin{equation*}
    f(\cdot) = \mu + \sum_{i=1}^{m} c_i \phi_i(\cdot),
\end{equation*}
where the $c_i$ are pairwise independent Gaussian variables and the $\phi_i$ form our basis functions. Suppose that we have a vector of $n$ continuous linear functionals $\bm{\delta}_n = (\delta_1,\ldots,\delta_n)^\top \in \mathcal{D}^n$, such that each $\delta_i$ belong to the design set $\mathcal{D}$ (see \Cref{subsec: bayesian experimental design}). We form the conditional distribution $f\,|\, \bm{\delta}_n(\mathrm{f})$ as follows, letting $c = \left(c_1,\ldots,c_m\right)^\top$, we have 
\begin{equation*}
    \begin{pmatrix}
    c \\
    \bm{\delta}_n(f)
    \end{pmatrix} \sim \mathcal{N}\left(0, \begin{pmatrix}
    K_{cc} & K_{c\bm{\delta}} \\
    K_{\bm{\delta} c} & K_{\bm{\delta}\bm{\delta}}
    \end{pmatrix}\right),
\end{equation*}
where $K_{cc} = \Cov(c,c) \in \mathbb{R}^{m\times m}$, $K_{c\bm{\delta}} = \Cov(c,\bm{\delta}_n) \in \mathbb{R}^{m\times n}$, $K_{\bm{\delta} c} = K_{c\bm{\delta}}^\top$ and $K_{\bm{\delta} \bm{\delta}} = \Cov(\bm{\delta}_n(f),\bm{\delta}_n(f)) \in \mathbb{R}^{n\times n}$. The conditional distribution can be computed using standard finite-dimensional formulae as $ c \,|\,\bm{\delta}_n (f) = \bm{\delta}_n(\mathrm{f}) \sim \mathcal{N}\left(\mu_{\bm{\delta}}, \Sigma_{\bm{\delta}}\right)$,
where
\begin{align}
    \mu_{\bm{\delta}} &= K_{c\bm{\delta}} K_{\bm{\delta} \bm{\delta}}^{-1} \bm{\delta}_n(\mathrm{f}), \label{eq: post mean} \\
    \Sigma_{\bm{\delta}} &= K_{cc} - K_{c\bm{\delta}}K_{\bm{\delta} \bm{\delta}}^{-1}K_{\bm{\delta} c}. \label{eq: post covariance matrix}
\end{align}
Since the components of $c$ are pairwise independent, we have $K_{cc} = \Lambda = \text{diag}\left(\Var(c_1),\ldots,\Var(c_m)\right)$. Furthermore, since $\bm{\delta}_n$ is a vector of linear functionals, we have, for each $i\in \{1,\ldots,n\}$, that $\delta_i f = \sum_{j=1}^m c_j \delta_i \phi_j$.
Therefore, we have
\begin{align*}
    \Cov(\delta_i f, \delta_j f) &= \Cov\left( \sum_{k=1}^m c_k \delta_i \phi_k, \sum_{k=1}^m c_k \delta_j \phi_k \right) 
    = \sum_{k=1}^m\Var(c_k) \delta_i\phi_k \delta_j \phi_k
\end{align*}
and so $K_{\bm{\delta} \bm{\delta}} = (\bm{\delta} \Phi) \Lambda (\bm{\delta} \Phi)^\top$, where $(\bm{\delta} \Phi)_{ij} = \delta_i \phi_j$. Finally, we have
\begin{align*}
    \Cov(\delta_i f, c_j) &= \Cov\left( \sum_{k=1}^m c_k \delta_i \phi_k, c_j\right) 
    = \Var(c_j)\delta_i \phi_j 
\end{align*}
and so $K_{\bm{\delta} c} = (\bm{\delta} \Phi)\Lambda$.
Thus all the required quantities can be explicitly evaluated.

\subsection{Sampling} \label{subsec: sampling}

To sample from the posterior process $f(\cdot) \,|\, \bm{\delta}_n(\mathrm{f})$, we can sample from the conditional distribution $c\,|\,\bm{\delta}_n(\mathrm{f})$ and then utilise the basis expansion of $f$ in \eqref{eq: truncated expansion}. 
To achieve this, we are required to perform a matrix square root of the posterior covariance matrix $\Sigma_{\bm{\delta}}$, and we recall that, when conditioning on exact information, the resulting $\Sigma_{\bm{\delta}}$ is singular in general. 
The standard solution of performing a singular value decomposition (SVD) is unsuitable, since the $\Sigma_{\bm{\delta}}$ often have repeated singular values, which are incompatible with existing implementations of automatic differentiation that assume uniqueness of the singular values \citep{papadopoulo2000SVD, paszke2019pytorch}. 
Although there have been recent efforts to address this \citep{Wang_2021}, the resulting algorithms are computationally prohibitive in our setting. 

An alternative method of sampling from $f(\cdot) \,|\, \bm{\delta}_n(\mathrm{f})$ is called \emph{Matheron's update rule} \citep[][Corollary 4]{wilson2021pathwise}. 
Matheron's update rule takes the form
\begin{equation} \label{eq: matheron}
    f(\cdot) \,|\, \bm{\delta}_n(\mathrm{f}) \overset{d}{=} f(\cdot) + \Cov(f(\cdot), \bm{\delta}_n(f))K_{\bm{\delta}\bm{\delta}}^{-1}(\bm{\delta}_n(\mathrm{f}) - \bm{\delta}_n(f)).
\end{equation}
The advantage of Matheron's update rule over the preceding approach is that we are not required to compute the square root of $\Sigma_{\bm{\delta}}$; this is the default approach used in \texttt{GaussED}.  


\subsection{Optimising the Acquisition Function} \label{subsec: optimising acquisition}

As discussed in \Cref{subsec: stochastic optimisation}, we utilise stochastic optimisation methodology to optimise the acquisition function. Unfortunately, the acquisition functions often exhibit multiple local optima, implying that it is unlikely that the optimiser will find a global optima. 
There are many approaches to reduce this probability, for instance by running the optimiser at different initialisations in parallel. In \texttt{GaussED}, the default approach is to sample uniformly from the design set, then evaluate the acquisition function at each of the sample points, before proceeding to initialise the optimiser at the best obtained point (i.e. Monte Carlo optimisation is used to initialise a stochastic optimisation method). This was the approach used in all the experiments of \Cref{sec: demonstrations}. 

Since our design sets are based on intervals\footnote{Recall from \Cref{sec: demonstrations} that all of the design sets were parameterised as a Cartesian product of intervals.}, we perform a standard reparameterisation to obtain a global optimisation problem in $\mathbb{R}^d$. 
This is achieved by using a scaled logistic function of the form
\begin{equation*}
    \text{logit}(x; a,b) = \log((x-a) / (b-a)) - \log(1-(x-a) / (b-a)),
\end{equation*}
where, for $x,a,b \in\mathbb{R}^d$, we consider $\text{logit}$ to be applied component-wise.

\section{Experimental Details} \label{sec: experiment details}

In this section we present full details for the experiments presented in \Cref{sec: demonstrations}.
All experiments can be reproduced using source code available at \url{https://github.com/MatthewAlexanderFisher/GaussED}.

\subsection{Probabilistic Solution of PDEs} \label{subsec: heat equation details}

\paragraph{Approximating the Loss: }

Following from \Cref{subsec: heat equation}, recall that the quantity of interest was the function $\mathsf{f}$, implying the loss takes the form
\begin{equation*}
    L(g,g') = \| g-g'\|^2 = \int_\mathcal{X} |g(x) - g'(x)|^2 \,\mathrm{d}x. 
\end{equation*}
Since there is not a closed-form solution to this integral when $g$ is a Gaussian process, we proceed by approximating the integral through a cubature rule. For this experiment, we performed a Riemann sum over a uniform $15 \times 15$ grid over the domain $\mathcal{X} = [-1,1]^2$. 

\paragraph{Gaussian Model:}

For this experiment, we used a mean-zero Gaussian process $f$ with Mat\'ern covariance function with smoothness parameter $\nu = 3.5$. 
The Dirichlet boundary conditions of the \ac{PDE} were automatically enforced by the spectral \ac{GP} approximation, applied to the domain $\mathcal{X} = [-1,1]^2$ (c.f. \Cref{eq: boundary conditions}).

\paragraph{Optimisation:}

For both the optimisation of the acquisition function and performing maximum likelihood estimation, we used the Adam stochastic optimisation methodology \citep{kingma2014adam}. 

Using the methodology discussed in \Cref{subsec: optimising acquisition}, at each iteration of SED, we sampled $100$ points uniformly from the design set and computed the corresponding values of acquisition function, using the default values of $N = 81$ and $M = 9$ in the stochastic gradient estimator of \Cref{subsec: stochastic optimisation}. We then proceeded by initialising the stochastic optimiser at the sample point which minimised the acquisition function. The learning rate used was the default value of $10^{-1}$ and the optimiser was run for $1000$ iterations, at each step of SED.

Using the methodology as discussed in \Cref{subsec: hyperparam}, we began optimising the amplitude $\lambda$ and the lengthscale $\ell$ after $n_0 = 10$ iterations of SED. This $n_0 = 10$ is the default value in \texttt{GaussED}. The initial parameter values were taken as the default values of $\lambda = 1$ and $\ell = 0.2$. The learning rate used was the default value of $10^{-3}$ and the optimiser was run $1000$ iterations, at each step of SED.

\paragraph{Code:}
The code used to run the experiment can be seen in \Cref{fig: syntax example} and discussed in \Cref{subsec: heat equation}.

\begin{figure}[ht]
\centering
\begin{minted}[
frame=lines,
framesep=5mm,
baselinestretch=1.2,
fontsize=\footnotesize,
%linenos
]{python}

k = MaternKernel(2, 2, initial_parameters)
domain = [[-1.05,1.05],[-1.05,1.05]]
gp = SpectralGP(k)
gp.set_domain(domain)

exponential_warp = lambda x: torch.exp(3 * x)
qoi = OutputWarp(gp, exponential_warp)()

X,Y = torch.meshgrid(torch.linspace(-1,1,25),torch.linspace(-1,1,25))
mesh = torch.stack([X,Y]).T.reshape(25**2,2)

loss = L2(qoi, mesh)

def d_func(design, m):
    all_phis = []
    for i in range(len(design)):
        design_i = design[i]
        line_int_gps = get_line_int_gps(design_i.unsqueeze(1), gp)
        for j in line_int_gps:
            all_phis.append(j.basis_matrix(None,m))
    return torch.cat(all_phis)
        
def d_sample(design_point, mean, cov, n, random_sample=None):
    all_samples = []
    line_int_gps = get_line_int_gps(design_point.unsqueeze(1), gp)
    matrix_sqrt = gp.solver.square_root(cov)
    for i in line_int_gps:
        samp_i = i.sample(mean, cov, n, random_sample, sqrt=matrix_sqrt)(None)
        all_samples.append(samp_i)
    return torch.cat(all_samples).T

initial_design = torch.Tensor([[0,  0,  0]])
d = Design(d_func, d_sampling, initial_design)
d.set_domain([[0, math.pi],[-1,1],[-1,1]])

acq = BayesRisk(gp, loss, d, nugget=1e-2)
experiment = Experiment(gp, transformed_black_box, d, acq, m=28)
experiment.run(30)
\end{minted}
\caption{The \texttt{GaussED} code used to run the tomographic reconstruction experiment of \Cref{subsec: tomographic reconstruction}.} \label{fig: tomography code}
\end{figure}

\FloatBarrier

\subsection{Tomographic Reconstruction} \label{subsec: tomographic details}

\paragraph{Approximating the Loss:}
Following from \Cref{subsec: tomographic reconstruction}, recall that the quantity of interest was the function $\exp(3\mathsf{f})$, implying the loss takes the form
\begin{equation*}
    L(g,g') = \| \exp(3g)-\exp(3g')\|^2 = \int_\mathcal{X} |\exp(3g(x)) - \exp(3g'(x))|^2 \,\mathrm{d}x. 
\end{equation*}

We follow the same approach of \Cref{subsec: heat equation details} and approximate the integral through a Riemann sum, now over a uniform $25\times 25$ grid over the domain $\mathcal{X} = [-1,1]^2$.

\paragraph{Gaussian Model:} 

For this experiment, we utilised a stationary Gaussian process model $f$ with Mat\'ern covariance with smoothness parameter $\nu = 2$. The Gaussian model is defined on the domain $[-1.05,1.05]^2$, since the boundary conditions of the resulting GP do not necessarily agree with the boundary conditions of the quantity of interest. 



\paragraph{Quantity of Interest:}
Recall from \Cref{subsec: tomographic reconstruction} that the quantity of interest was of the form
\begin{equation*}
    \mathsf{f}(x) = \begin{cases}
    1, &  \text{when }\|x-(0.4,0.4)\| < 0.3, \\
    0, & \text{otherwise}.
    \end{cases}
\end{equation*}
Since this quantity of interest defines a circle within the domain $\mathcal{X} = [-1,1]^2$, it is possible to find a closed form solution to the line integrals of $\mathsf{f}$ for given parameters values $(\theta,x,y)$. However, for ease of implementation and to allow our approach to be easily generalised to more complex examples, we computed the line integrals of $\mathsf{f}$ by performing a Riemann integral over a uniform mesh consisting of $200$ evaluations from $\mathsf{f}$.  

\paragraph{Optimisation: }

All settings used were the same as the previous experiment detailed in \Cref{subsec: heat equation details}, apart from the following settings:

We began optimising the amplitude $\lambda$ and lengthscale $\ell$ parameters of the Gaussian process at step $n_0 = 0$. The initial parameter values were taken as $\lambda = 0.5$ and $\ell = 0.4$. 

\paragraph{Code: }

The \texttt{GaussED} code used to run this experiment is presented in \Cref{fig: tomography code}. The structure of the code is quite different to the code used in the other experiments (\Cref{fig: syntax example} and \Cref{fig: bayesian optimisation code}). This is due to the fact that the design object (\texttt{d}) is not instantiated by the \texttt{EvaluationDesign} class. Note that for both the PDE experiment (\Cref{subsec: heat equation}) and the Bayesian optimisation experiment (\Cref{subsec: lotka volterra}), the design sets $\mathcal{D}$ consisted of evaluations of the Gaussian process, $\delta (f) = f(x)$, or its derivatives $\delta (f) = \partial_i f(x)$. In situations such as these, the \texttt{EvaluationDesign} class may be used.  For this example, however, the observed data consists of line integrals. Therefore, in this more general situation, we must specify two further functions: Given a parameterisation $\mathcal{D}_\theta$ of the design set, the first function must take in a sequence of parameters $\theta_1,\ldots,\theta_n$ and return the corresponding $(\bm{\delta}\Phi)_{ij} = \delta_{\theta_i} \phi_j$ matrix, where the $\phi_j$ are the eigenfunctions of \eqref{eq: dirichlet basis}. This is reflected in the code (\Cref{fig: tomography code}) in the function \texttt{d\_func}, which, for each design set parameter constructs the corresponding line integral for a given number of basis functions (\texttt{m}). The second function we must specify must be able to, given a parameter $\theta$, sample from the process $\delta_{\theta} f\,|\, \bm{\delta}_n$, where $\bm{\delta}_n$ is data gathered from SED. This is directly reflected in the code (\Cref{fig: tomography code}) in the function \texttt{d\_sample}. Note that, in \Cref{fig: tomography code} we omit the \texttt{get\_line\_int\_gps} function. This is a function that, given a parameter value $\theta$ and Gaussian process $f$, returns the corresponding $\delta_\theta f$ object. We do this because \texttt{get\_line\_int\_gps} is complexified due to the parameterisation of the line function $r(x)$ and the calculation of the limits of integration $a,b$ in the line integral
\begin{equation*}
    \int_a^b f(r(x))\,\mathrm{d}x.
\end{equation*}
We, therefore, omit \texttt{get\_line\_int\_gps} for clarity.

A second major difference, is the use of an output warp (\texttt{OutputWarp}). Due to the non-linear nature of the output warp, the resulting object \texttt{qoi} is only able to sample from the prior and posterior. Note that the syntax for specifying a output deformation of a GP is the same as specifying other transformations (e.g. see \Cref{fig: syntax example} and \Cref{fig: bayesian optimisation code}).

Another difference is that the Gaussian model specified in the PDE experimental code (\Cref{fig: syntax example}) agreed with the boundary conditions of the PDE; here, however, we specify the domain (\texttt{gp.set\_domain}) as $[-1.05,1.05]\times[-1.05,1.05]$. 
Since we took the domain of the Gaussian process to be larger than the domain on which the task is defined, we must also specify the domain of the design object (\texttt{d.set\_domain}), which otherwise, by default, would be taken as the same the Gaussian model (\texttt{gp}).

Finally, note that the acquisition function (\texttt{acq}), as discussed previously, is instantiated with a nugget value of $10^{-2}$ and the experiment object (\texttt{experiment}) is instantiated with $m = 28^2$ basis functions. This is in contrast to the code for the PDE example (\Cref{fig: syntax example}), which used the default value of $m=30^2$ basis functions.

\subsection{Gradient-Based Bayesian Optimisation} \label{subsec: lotka-volterra details}

\paragraph{Approximating the Loss:}

Recall from \Cref{subsec: lotka volterra} that our quantity of interest is $q(\mathsf{f}) = \max_{x\in \mathcal{X}} \log \mathcal{L}(x)$. Thus, our loss function takes the form
\begin{equation*}
    L(g,g') = \left| \max_{x \in \mathcal{X}}\left(g(x)\right) - \max_{x \in \mathcal{X}}\left(g'(x)\right)\right|^2. 
\end{equation*}

In order to optimise the samples, we used a grid-based optimiser using a uniform $40\times 40$ grid over the domain of interest $[0.45,0.9]\times[0.09,0.5]$.

\paragraph{Gaussian Model:}
For this experiment, we used a mean-zero stationary Gaussian model $f$ with Mat\'ern covariance, with smoothness parameter $\nu = 3$.
Since our \ac{GP} satisfies the boundary conditions in \eqref{eq: boundary conditions}, which are unrelated to the task at hand, we took the domain of the GP to be  $[0.4,0.95]\times[0.04,0.55]$, which is wider than the domain on which the task is defined. 

\paragraph{Quantity of Interest:}

Synthetic data $y = (p_i,q_i)_{i=1}^{51}$ were generated at times $t = 0, 0.5,1,\ldots,50$ by perturbing the solution of the Lotka--Volterra model, with parameter values $(\alpha,\beta,\gamma,\delta)= (0.5,0.1,0.3,0.1)$, with mean-zero Gaussian errors with variance $\sigma^2 = 0.05^2$. The data used for the log-likelihood and the corresponding true solution with $(\alpha,\beta,\gamma,\delta) = (0.5,0.1,0.3,0.1)$ are displayed in \Cref{fig: bod data}. 

\begin{figure}[h!] 
   \centering
   \includegraphics[width=0.75\textwidth]{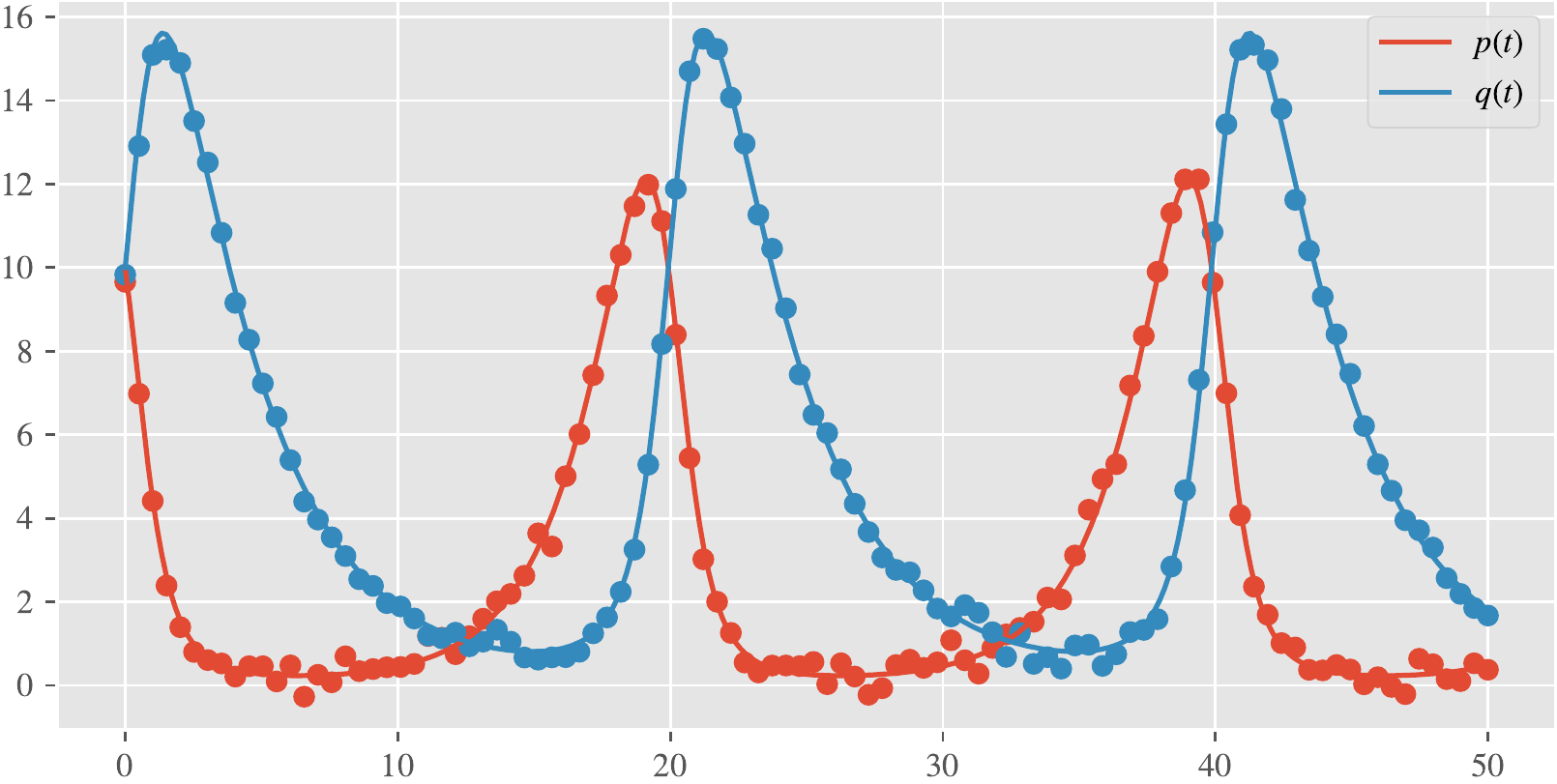}

\caption{
    Solution of the Lotka--Volterra model with parameter values $\theta = (0.5,0.1,0.3,0.1)$, with the synthetic data $y = (p_i,q_i)_{i=1}^{51}$ overlaid.
}
\label{fig: bod data}
\end{figure}

\paragraph{Optimisation: }

All settings were as the previous experiment detailed in \Cref{subsec: heat equation details}, apart from the following settings:

For both the optimisation of the acquisition function and performing maximum likelihood estimation, we used the Adam stochastic optimisation methodology \citep{kingma2014adam}. 

Using the methodology as discussed in \Cref{subsec: optimising acquisition}, at each iteration of SED, we sampled $100$ points times uniformly from the design set and computed the corresponding values of acquisition function, using the default values of $N = 81$ and $M = 9$ in the stochastic gradient estimator of \Cref{subsec: stochastic optimisation}. We then proceeded by initialising the stochastic optimiser at the sample point which minimised the acquisition function. The learning rate used was the default value of $10^{-1}$ and the optimiser was run $1000$ iterations, at each step of SED. Furthermore, in order to increase the numerical stability of linear algebra operations, we used a nugget term of value $10^{-5}$.

For this experiment, we began optimising the amplitude $\lambda$ and lengthscale $\ell$ at step $n_0 = 10$. The initial kernel parameter values were taken as the values of $\lambda = 1$ and $\ell = 0.1$. The initial parameter values of the spatial deformation were taken as $\theta_x = (1,0,0,1)$ and $\theta_y = (1,0,0,1)$, thus specifying the initial spatial deformation as the identity function. The learning rate used was the default value of $10^{-3}$ and the optimiser was run $1000$ iterations, at each step of SED.

\paragraph{Code:}

The \texttt{GaussED} code used to run this experiment is presented in \Cref{fig: bayesian optimisation code}. The structure of the program is very similar in nature to the PDE experiment of \Cref{subsec: heat equation}. The first difference is that, at each step of SED, we evaluate multiple functionals $\delta$ from the design $\mathcal{D}$. This is directly reflected in \Cref{fig: bayesian optimisation code}, where the design object (\texttt{d}) is constructed by the statistical model $f$ and its first derivatives (\texttt{[gp, gp\_d1, gp\_d2]}).  

The second difference is that, at each step SED, we perform a maximisation, rather than an integral, of sample paths when estimating the acquisition function. In the code for the PDE experiment (\Cref{fig: syntax example}) the integral of posterior samples is hidden within the loss object (\texttt{L2(qoi)}), which, by default, performs a Riemann sum over a uniform mesh if the quantity of interest \texttt{qoi} is function valued. Therefore, in \Cref{fig: bayesian optimisation code} we specify a numerical method that acts on samples from $f$. In this instance, we perform a grid search (\texttt{maximise\_method}) over a uniform $40\times 40$ mesh (\texttt{mesh}) over the domain of optimisation.

Another difference is that the Gaussian model specified in the PDE experimental code (\Cref{fig: syntax example}), agrees with the boundary conditions of the PDE and therefore the domain of the GP is taken as the default value $[-1,1]^2$. In \Cref{fig: bayesian optimisation code}, we must specify the domain (\texttt{gp.set\_domain}) as $[0.4,0.95]\times[0.04,0.55]$. Since we took the domain of the Gaussian process to be larger than the domain over which we wish to maximise, we must also specify the domain of the design object (\texttt{d.set\_domain}), which otherwise, by default, would be taken as the same the Gaussian model (\texttt{gp}).

The final difference is that, in order to increase the numeric stability of linear algebra operations in the SED, we specify a nugget term (\texttt{nugget}) of value $10^{-5}$ in the acquisition function \texttt{acq}.

\begin{figure}[ht]
\centering
\begin{minted}[
frame=lines,
framesep=5mm,
baselinestretch=1.2,
fontsize=\footnotesize,
%linenos
]{python}
k = MaternKernel(3, 2, initial_parameters)
gp = SpectralGP(k)
gp.set_domain(torch.Tensor([[0.4,0.95],[0.04,0.55]]))

gp_d1 = Differentiate(gp,[0],[1])()
gp_d2 = Differentiate(gp,[1],[1])()

x, y = torch.meshgrid(torch.linspace(0.45,0.9,40),torch.linspace(0.09,0.5,40))
mesh = torch.stack([x,y]).T.reshape(40**2,2)
maximise_method = GridSearch(mesh)

qoi = Maximise(gp, maximise_method)()

d = EvaluationDesign([gp, gp_d1, gp_d2], initial_design=torch.Tensor([[0.675, 0.295]]))
d.set_domain(torch.Tensor([[0.45,.9],[0.09,0.5]]))

loss = L2(q)
acq = BayesRisk(q, loss, d, nugget=1e-5)

experiment = Experiment(gp, lotka_volterra, d, acq, m=35)
experiment.start_hyp_optimising_step = 10
experiment.run(30)
\end{minted}
\caption{The \texttt{GaussED} code used to run the gradient-based Bayesian optimisation experiment of \Cref{subsec: lotka volterra}.} \label{fig: bayesian optimisation code}
\end{figure}

\section{Evaluating Computational Aspects of \texttt{GaussED}} \label{sec: evaluating Gaussed}

In this section we empirically investigate computational aspects of \texttt{GaussED}. In \Cref{subsec: stochastic optimisation investigation}, we explore the role of the optimisation methodology and how this affects the experimental design as well as the quality of output. In \Cref{subsec: basis function investigation}, we investigate how the number of basis functions used, for a given problem, affects the quality of posterior inference.

\subsection{Investigating the Efficacy of Stochastic Optimisation} \label{subsec: stochastic optimisation investigation}

In this section, we investigate the effect of the random seed on the quality of the experimental design and, further, investigate the effect of changing the stochastic optimisation approach itself. 
To explore these aspects of \texttt{GaussED}, we repeat the Bayesian optimisation with gradient data experiment presented in \Cref{subsec: lotka volterra}. Recall that, in all the demonstrations in \Cref{sec: demonstrations}, we utilised the Adam stochastic optimisation method \citep{kingma2014adam}. 

Results on the effect of the random seed can be seen in \Cref{fig: lotka_volterra initial error} and \Cref{fig: lotka_volterra initial designs}. The obtained designs imply that our approach of SED is sensitive to the initial conditions. Although the specific design is sensitive, the overall performance and qualitative nature of the designs are approximately independent of random seed.

\begin{figure}[h!]
\centering
\begin{subfigure}[b]{0.4\linewidth}
\includegraphics[width=\textwidth]{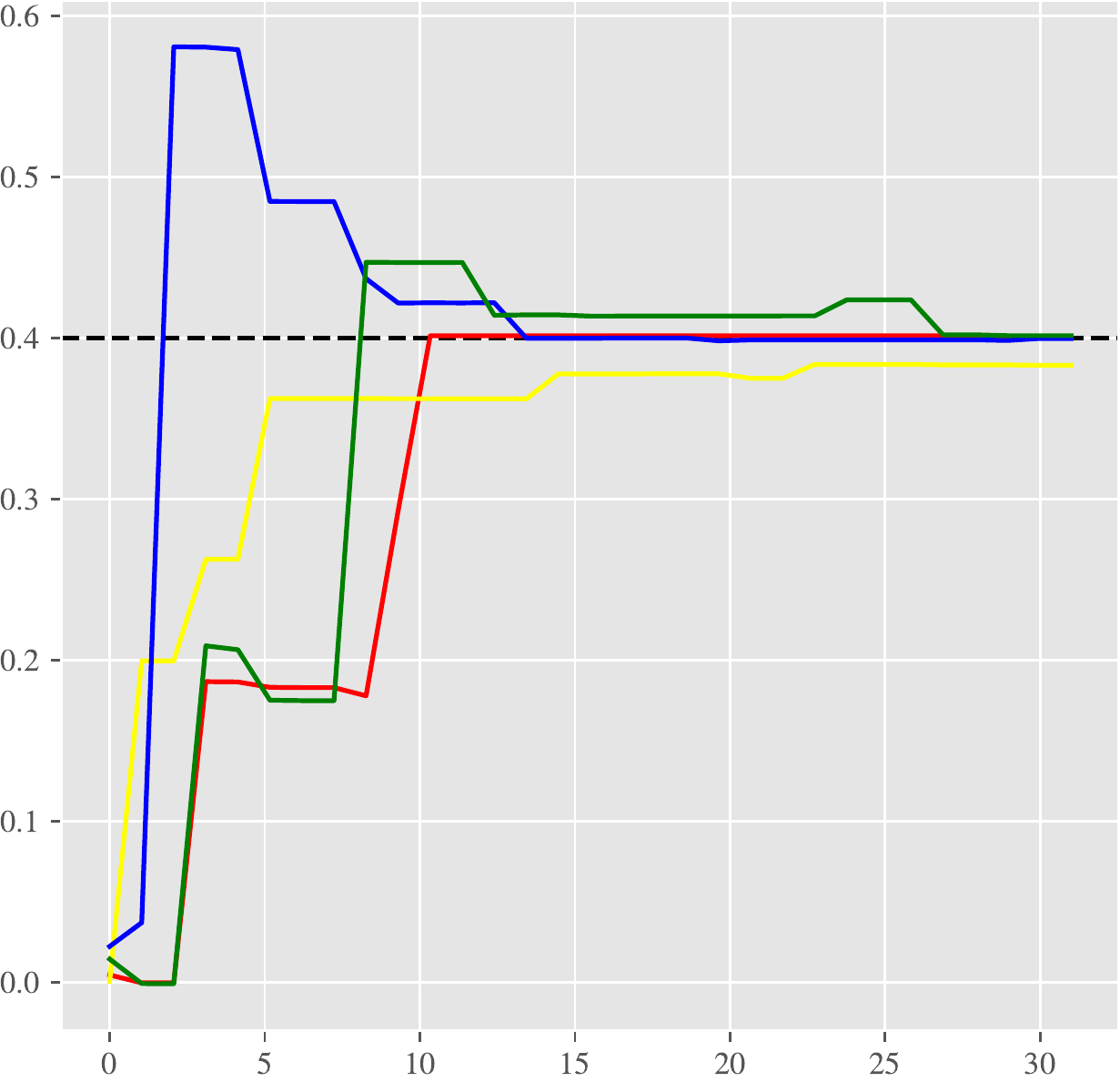} 
\caption{}
\end{subfigure}
\begin{subfigure}[b]{0.4\linewidth} 
\includegraphics[width=\textwidth]{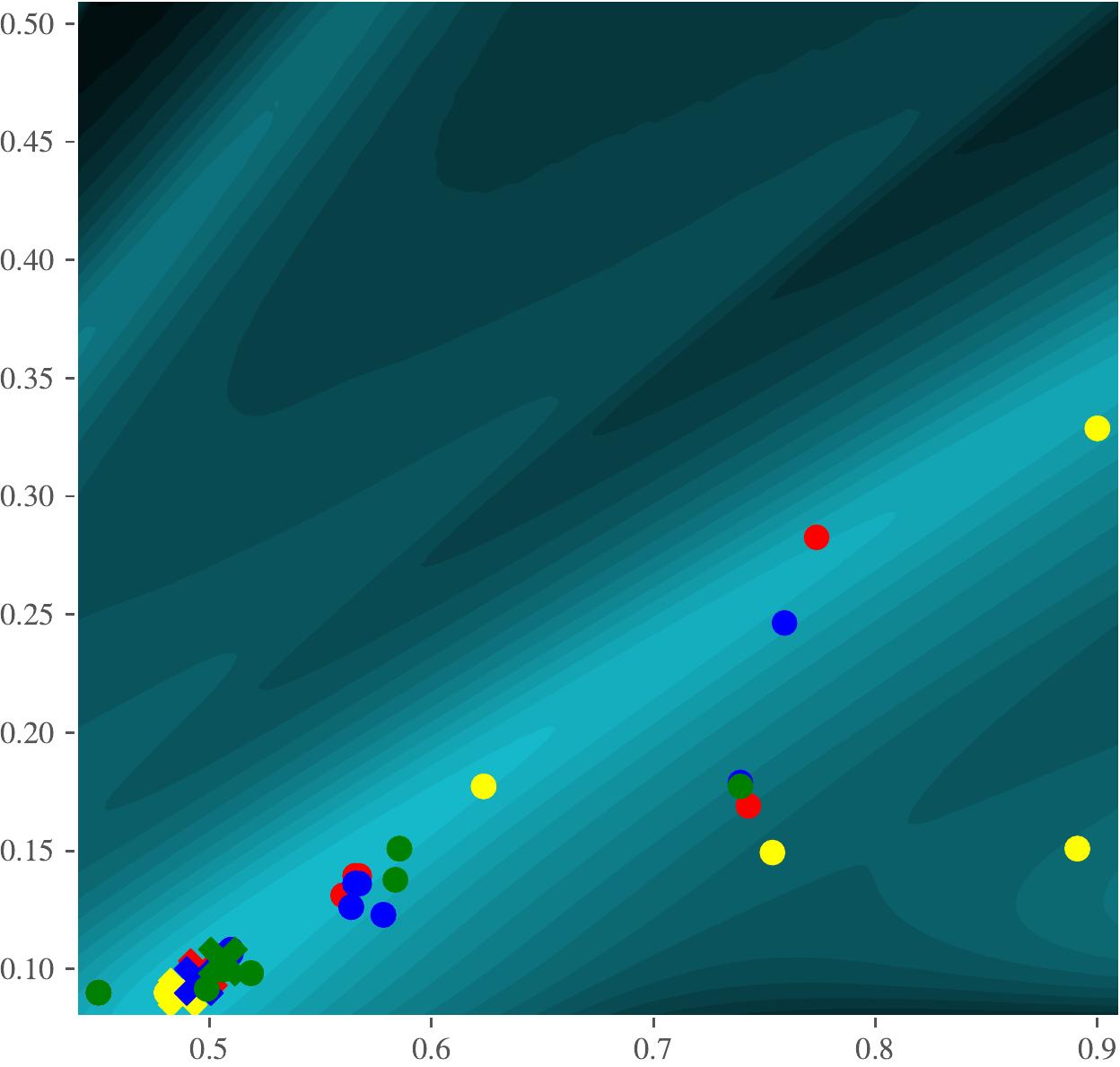}
\caption{} 
\end{subfigure}
\caption{ Convergence analysis of Bayesian optimisation with $4$ different random seeds. The left panel (a) displays the maximal value obtained for each of the random seeds, with each colour corresponding to a different random seed.  The right panel (b) displays the coordinate positions of the obtained maximum value, where the colored symbols \ding{54} indicate the coordinate position of the obtained maximum value at termination. Again, the maximum of the posterior mean is reported.
}
\label{fig: lotka_volterra initial error}   
\end{figure}

\begin{figure}[h!]
\centering

\includegraphics[width=0.6\textwidth]{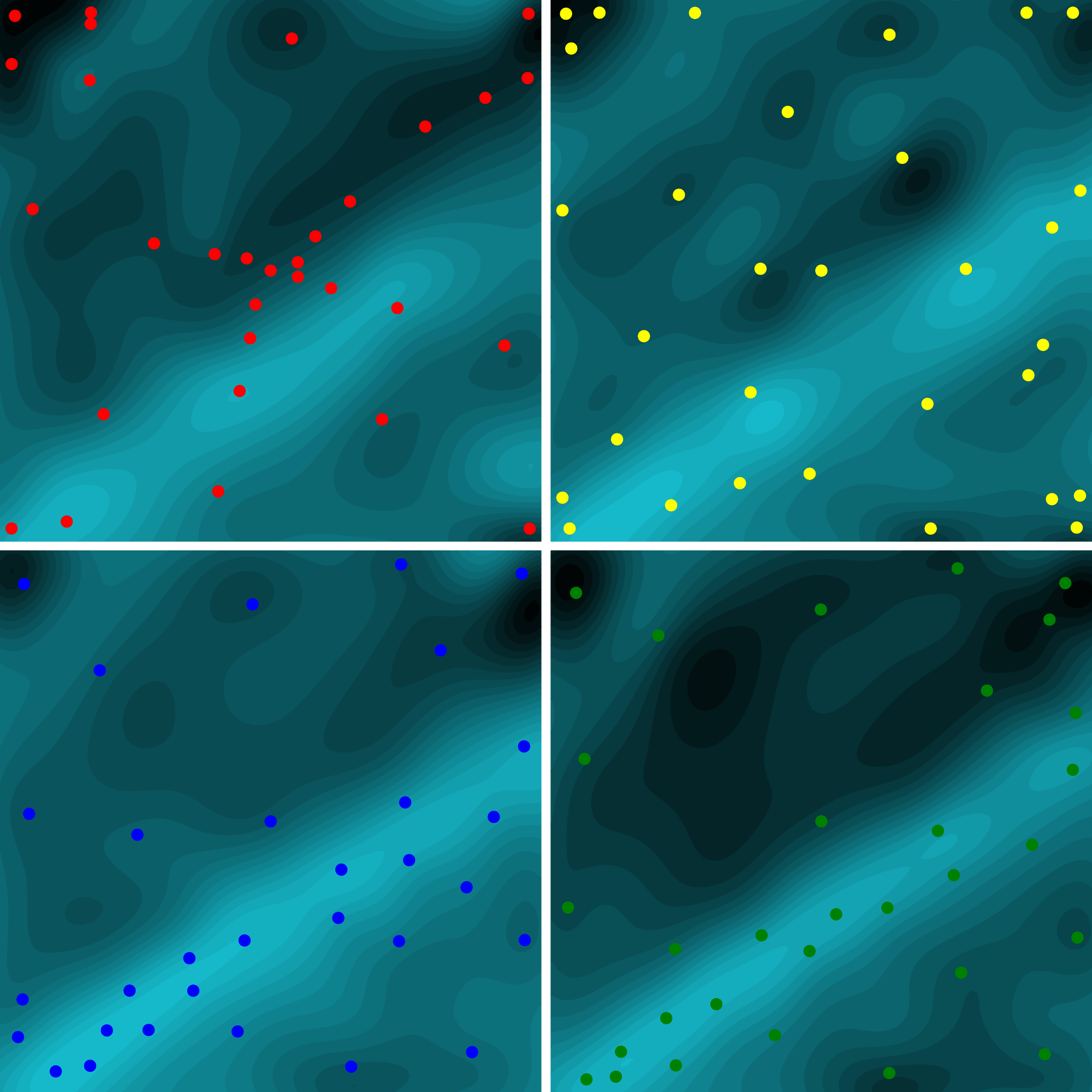} 

\caption{ Designs obtained by SED for the $4$ different random seeds along with the corresponding obtained posterior means. The colours correspond to the same random seed as displayed in \Cref{fig: lotka_volterra initial error}.
}
\label{fig: lotka_volterra initial designs}   
\end{figure}

Results on the effect of stochastic optimisation methodology can be seen in \Cref{fig: lotka_volterra method error} and \Cref{fig: lotka_volterra method designs}. In each of these experiments, the random seed was fixed, and so we are only comparing the effect of different optimisation methodologies. In each experiment, the learning rate was set at $10^{-1}$ and the other parameter values were taken as their default values, as specified in \texttt{PyTorch} \citep{paszke2019pytorch}. 

\begin{figure}[h!]
\centering
\begin{subfigure}[b]{0.4\linewidth}
\includegraphics[width=\textwidth]{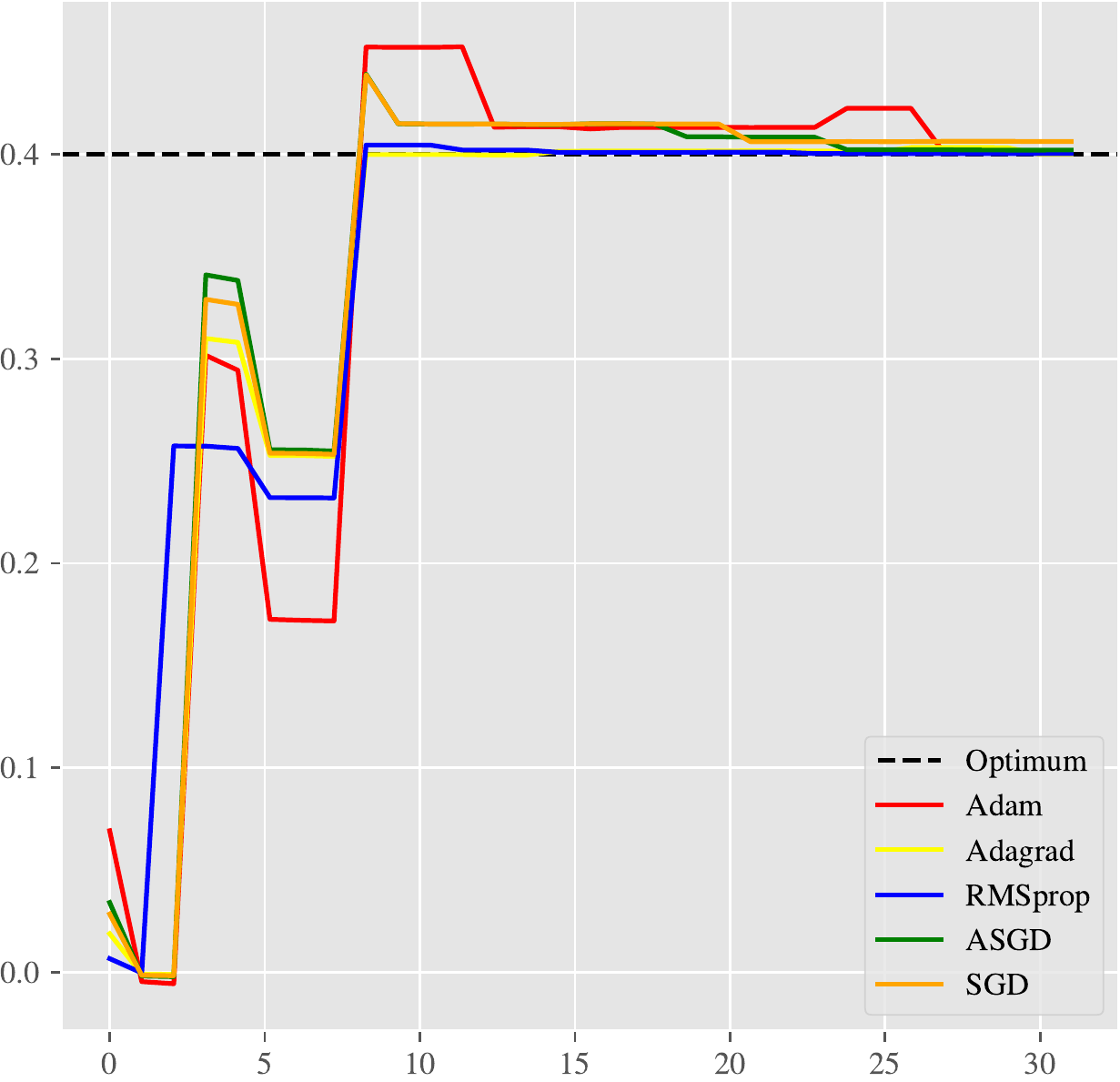} 
\caption{}
\end{subfigure}
\begin{subfigure}[b]{0.4\linewidth} 
\includegraphics[width=\textwidth]{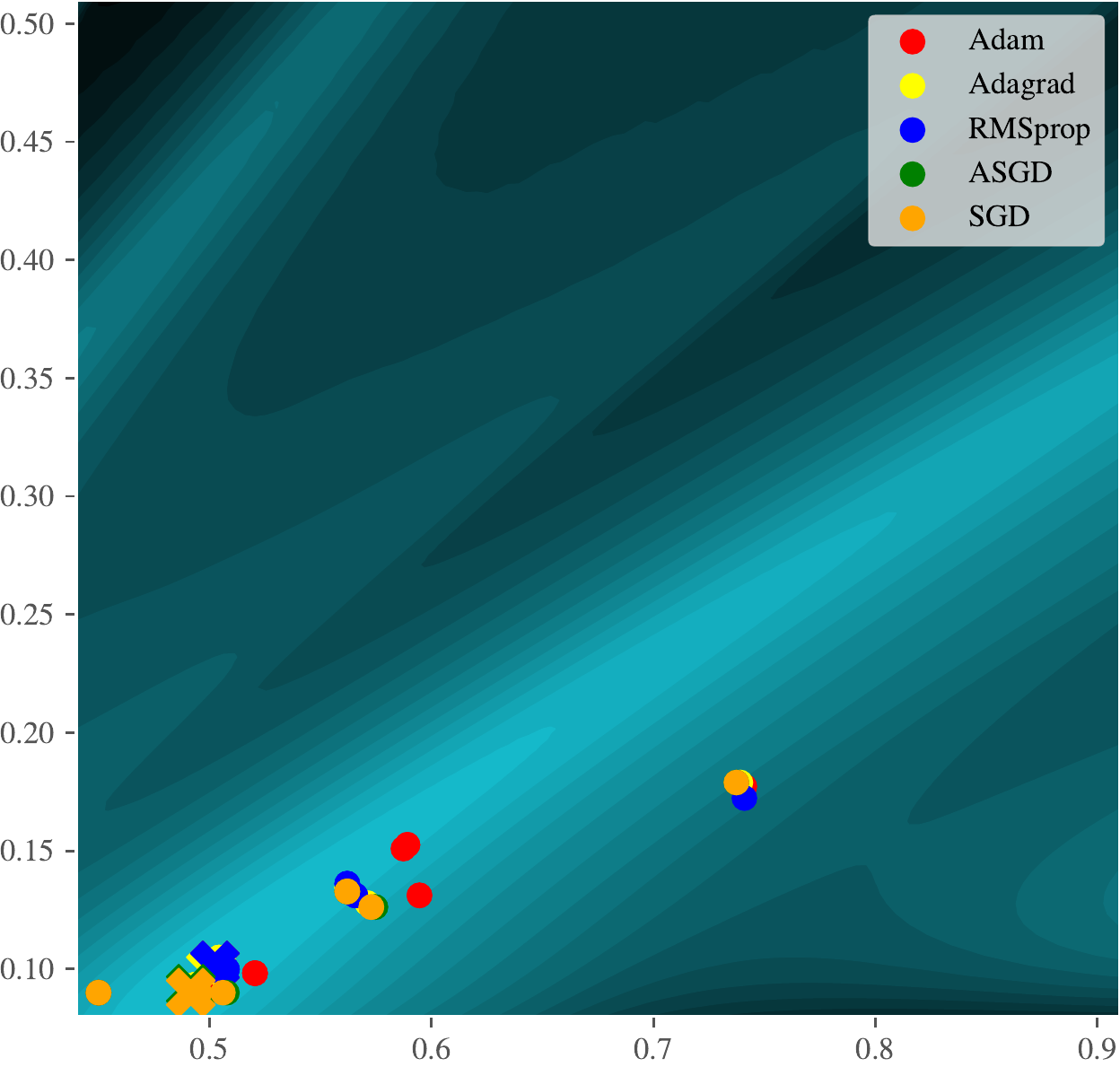}
\caption{}
\end{subfigure}
\caption{ Convergence analysis of Bayesian optimisation with different optimisation methods. The left panel (a) displays the maximal value obtained for each of the optimisation methods, with each colour corresponding to a different method. The right panel (b) displays the coordinate positions of the obtained maximum value, where the colored symbols \ding{54} indicate the coordinate position of the obtained maximum value at termination. Again, the maximum of the posterior mean is reported.
}
\label{fig: lotka_volterra method error}   
\end{figure}

\begin{figure}[h!]
\centering

\includegraphics[width=0.8\textwidth]{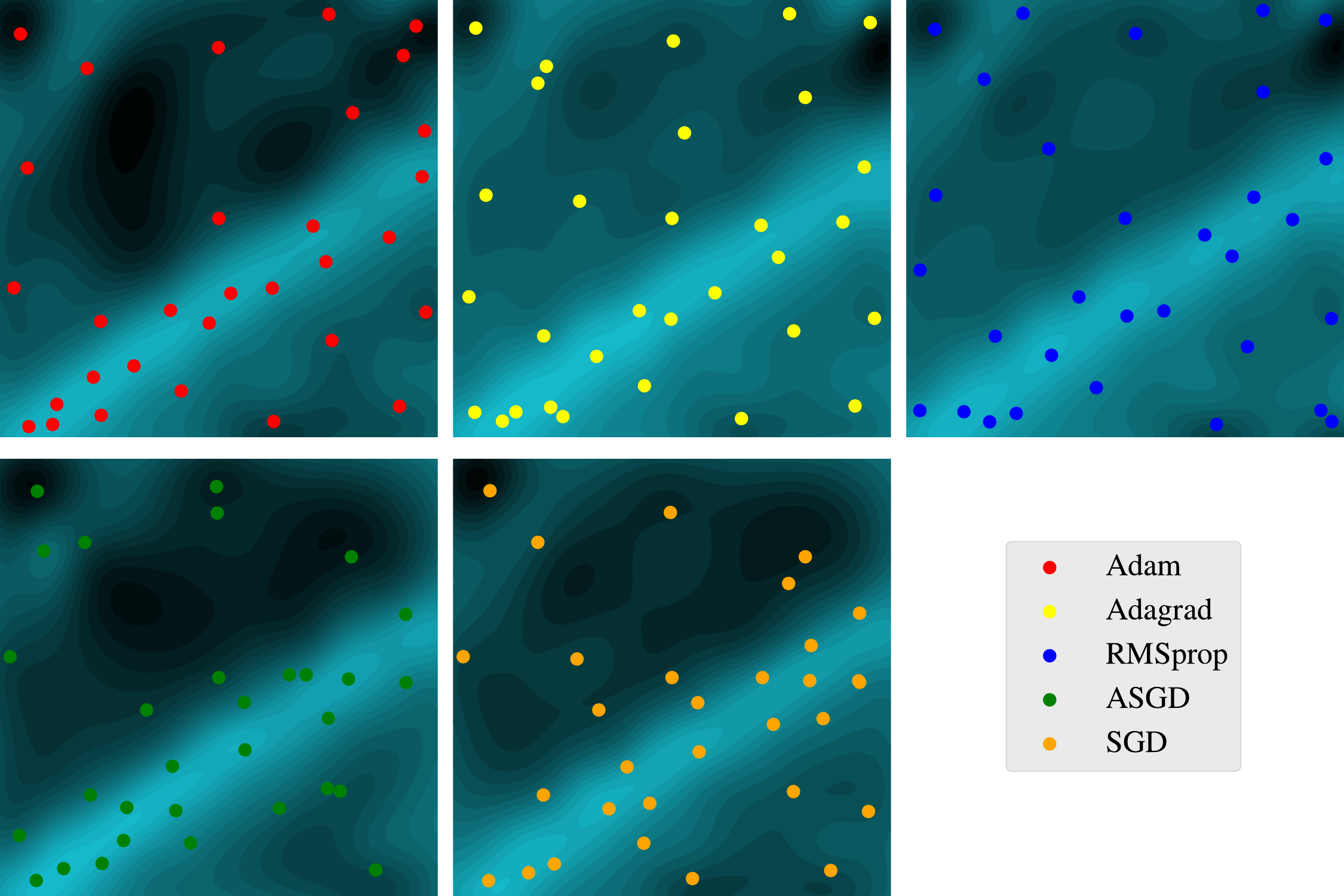} 

\caption{ Designs obtained by SED for the $5$ different optimisation methods along with the corresponding obtained posterior means. The colours correspond to the same optimisation methods as displayed in \Cref{fig: lotka_volterra method error}.
}
\label{fig: lotka_volterra method designs}   
\end{figure}

\subsection{Investigating the Effect of the Number of Basis Functions} \label{subsec: basis function investigation}

Picking an appropriate number of basis functions for a given problem is an important means to reduce computational cost in \texttt{GaussED}. 
In this section, we investigate how the number of basis functions may affect the quality of posterior inference. To this end, it is sufficient to consider the behaviour of posterior sampling in dimension $d=1$, since the behaviour will naturally extend to higher-dimensions due to the exponential scaling of the number of basis function due to \eqref{eq: dirichlet basis}.

In the event where the number of basis functions is smaller than the number of linearly independent data, the resulting posterior will not be well-defined in general. 
The introduction of a nugget term on the diagonal of the covariance matrix, implicitly assuming noisy Gaussian observations, is a pragmatic solution that is widely-used. 
However, the success of this strategy depends crucially on an appropriate amount of regularisation being introduced.

Results on the effect on the number of basis functions and the nugget term are presented in \Cref{fig: matern basis functions}. Through visual inspection, by $m=20$ basis functions, it appears that the posterior process has converged sufficiently well to the true posterior process. Note that, when $m = 7$, the posterior sample paths overlap. This is due to there being only one value of $c_1,\ldots,c_7$ such that the truncated basis model agrees with the $7$ evaluations.

\begin{figure}[h!]
\centering

\includegraphics[width=\textwidth]{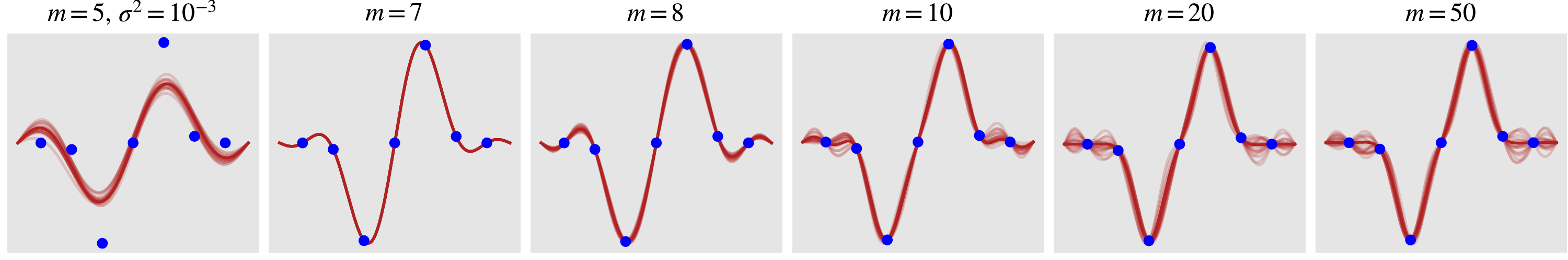} 

\caption{ Samples and posterior mean based on a mean-zero Gaussian process $f$ with Mat\'ern covariance with smoothness parameter $\nu = 1.5$, amplitude $\lambda = 0.1$ and lengthscale $\ell = 0.1$, conditioned to interpolate the $7$ (blue) data points. The corresponding number $m$ of basis functions used in each experiment is displayed in the titles of the subplots. In the event where $m$ is smaller than the number of data points conditioned upon, the corresponding nugget term $\sigma^2$ is also displayed.
}
\label{fig: matern basis functions}   
\end{figure}

\end{document}